\newcommand\norm[1]{\left\lVert#1\right\rVert}
\newcommand{\Lapl}{\mathbf{\mathop{\mathcal{L}}}}
\newcommand{\Trans}[1]{{#1}^{\top}}
\newcommand{\Mat}[1]{\mathbf{#1}}
\newcommand{\Space}[1]{\mathbb{#1}}
\newcommand{\Set}[1]{\mathcal{#1}}
\newcommand{\ie}{\emph{i.e., }}
\newcommand{\eg}{\emph{e.g., }}
\newcommand{\wrt}{\emph{w.r.t.}~}
\newcommand{\cf}{\emph{cf. }}
\newcommand{\aka}{\emph{aka. }}
\newcommand{\wx}[1]{{\color{black}{#1}}}
\newcommand{\za}[1]{{\color{black}{#1}}}
\newcommand{\czb}[1]{{\color{black}{#1}}}
\newcommand{\slh}[1]{{\color{black}{#1}}}
\definecolor{-}{rgb}{0.25,0.41,0.88}
\definecolor{+}{rgb}{0.70,0.13,0.13}
\newenvironment{derivation_out}{%
  \proof}{\endproof}
\newenvironment{derivation_full}{%
  \proof}{\endproof}
\title{Empowering Collaborative Filtering with Principled Adversarial Contrastive Loss}
\author{
  An Zhang$^{1}$\thanks{An Zhang and Leheng Sheng contribute equally to this work.}~~~Leheng Sheng$^{2*}$~~\textbf{Zhibo Cai$^{3}$}\thanks{Zhibo Cai is the corresponding author.}  ~~\textbf{Xiang Wang$^{4}$}
  ~~\textbf{Tat-Seng Chua$^{1}$}\\
  $^1$National University of Singapore\\
  $^2$Tsinghua University\\
  $^3$Center for Applied Statistics and School of Statistics, Renmin University of China\\
  $^4$University of Science and Technology of China\\
  \texttt{anzhang@u.nus.edu},~~\texttt{chenglh22@mails.tsinghua.edu.cn},\\~~\texttt{caizhibo@ruc.edu.cn},~~\texttt{xiangwang1223@gmail.com}\texttt{dcscts@nus.edu.sg}
}
\begin{document}

\maketitle

\begin{abstract}
    Contrastive Learning (CL) has achieved impressive performance in self-supervised learning tasks, showing superior generalization ability.
    Inspired by the success, adopting CL into collaborative filtering (CF) is prevailing in semi-supervised top-$K$ recommendations.
    The basic idea is to routinely conduct heuristic-based data augmentation and apply contrastive losses (\eg InfoNCE) on the augmented views.
    Yet, some CF-tailored challenges make this adoption suboptimal, such as the issue of out-of-distribution, the risk of false negatives, and the nature of top-$K$ evaluation.
    They necessitate the CL-based CF scheme to focus more on mining hard negatives and distinguishing false negatives from the vast unlabeled user-item interactions, for informative contrast signals.
    Worse still, there is limited understanding of contrastive loss in CF methods, especially \wrt its generalization ability.
    To bridge the gap, we delve into the reasons underpinning the success of contrastive loss in CF, and propose a principled \textbf{Adv}ersarial \textbf{InfoNCE} loss (\textbf{AdvInfoNCE}), which is a variant of InfoNCE, specially tailored for CF methods.
    AdvInfoNCE adaptively explores and assigns hardness to each negative instance in an adversarial fashion and further utilizes a fine-grained hardness-aware ranking criterion to empower the recommender's generalization ability.
    Training CF models with AdvInfoNCE, we validate the effectiveness of AdvInfoNCE on both synthetic and real-world benchmark datasets, thus showing its generalization ability to mitigate out-of-distribution problems.
    Given the theoretical guarantees and empirical superiority of AdvInfoNCE over most contrastive loss functions, we advocate its adoption as a standard loss in recommender systems, particularly for the out-of-distribution tasks.
    Codes are available at \url{https://github.com/LehengTHU/AdvInfoNCE}.

\end{abstract}

\section{Introduction} 


Contrastive Learning (CL) has emerged as a potent tool in self-supervised learning tasks \cite{SimCLR, MoCo, Barlow_Twins, SimSiam}, given its superior generalization ability.
By simultaneously pulling positive pairs close together while pushing apart negative pairs in the feature space \cite{alignment_uniformity}, CL has demonstrated the ability to extract general features from limited signals.
This promising result has propelled research interest in leveraging CL for Collaborative Filtering (CF) in top-$K$ recommendation tasks, leading to a marked enhancement in performance and generalization ability \cite{DHCN, KMCLR, S2-MHCN, S3-Rec, InvCF, PopGo, FastVAE}.
Specifically, the prevalent paradigm in CL-based CF methods is to routinely adopt heuristic-based data augmentation for user-item bipartite graphs \cite{SGL, CLRec, AutoCF}, coupled with the indiscriminate application of contrastive losses such as InfoNCE \cite{InfoNCE}. 
Wherein, a common assumption underlying these methods is considering all unobserved interactions as negative signals \cite{DNS, WARP}.

Despite the empirical success of CL-based CF methods, two critical limitations have been observed that potentially impede further advancements in this research direction:
\begin{itemize}[leftmargin=*]
\item \textbf{Lack of \slh{considering the} tailored inductive bias for CF.}
As a standard approach for semi-supervised top-$K$ recommendation with implicit feedback, CF methods face unique challenges.
In most cases, the majority of user-item interactions remain unobserved and unlabelled, from which CF methods directly draw negative instances.
However, indiscriminately treating these unobserved interactions as negative signals \slh{overlooks} the risk of false negatives, thus failing to provide reliable contrastive signals \cite{RINCE}.
This issue is known as the exposure bias \cite{survey, DR} in recommender systems.
Unfortunately, this inherent inductive bias is rarely considered in current CL-based CF methods, resulting in suboptimal recommendation quality and generalization ability \cite{HDCCF}. 

\item \textbf{Limited theoretical understanding for \slh{the} generalization ability of CL-based CF methods.} 
While data augmentation, as a key factor for generalization ability, is pivotal in computer vision \cite{epsilon-delta, ConnectNotCollapse}, \slh{CL-based CF models exhibit} insensitivity to perturbations in the user-item bipartite graph \cite{BUIR}. 
Recent studies \cite{SimGCL, XSimGCL, CL-DRO} empirically reveal that contrastive loss plays a more crucial role in boosting performance in CF than heuristic-based graph augmentation. 
However, an in-depth theoretical understanding of contrastive loss in top-$K$ recommendation that could shed light on the generalization ability of CL-based CF methods remains largely under-explored. 
\end{itemize}

\begin{figure*}[t]
     \centering
     \begin{subfigure}[b]{0.325\textwidth}
         \centering
         \includegraphics[width=\textwidth]{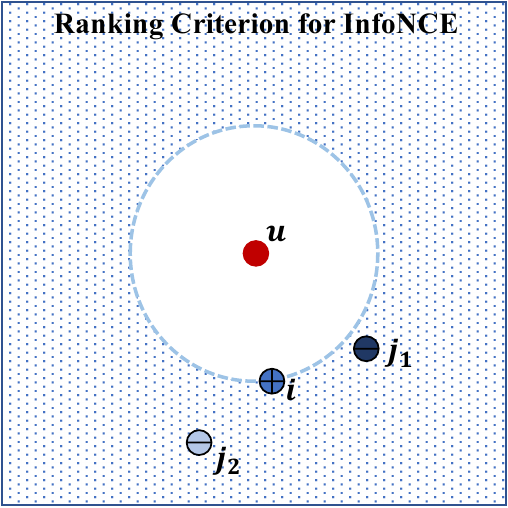}
         \caption{$\forall j, s(u,j)-s(u,i)\leq 0$}
         \label{fig:original_condition}
     \end{subfigure}
     \hfill
     \begin{subfigure}[b]{0.325\textwidth}
         \centering
         \includegraphics[width=\textwidth]{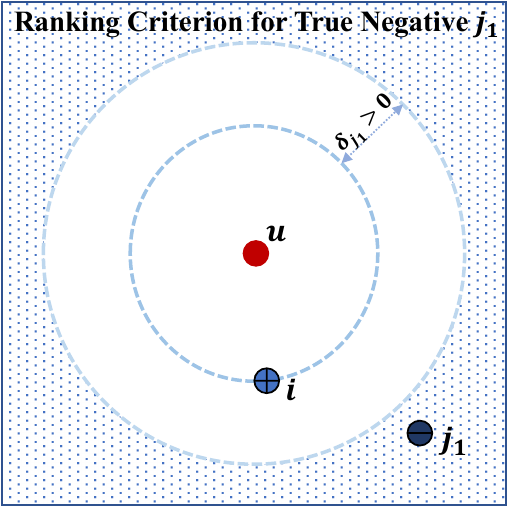}
         \caption{$s(u,j_1)-s(u,i)\leq -\delta_{j_1}$}
         \label{fig:true_neg}
     \end{subfigure}
     \hfill
     \begin{subfigure}[b]{0.325\textwidth}
         \centering
         \includegraphics[width=\textwidth]{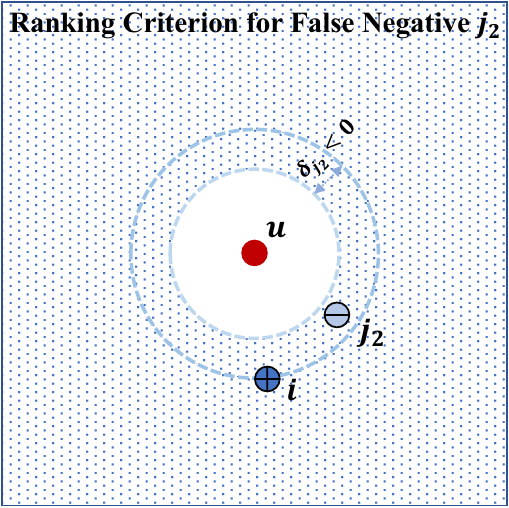}
         \caption{$s(u,j_2)-s(u,i)\leq -\delta_{j_2}$}
         \label{fig:false_neg}
     \end{subfigure}
        \caption{Comparative geometric interpretation of ranking criteria based on similarity measure $s(\cdot,\cdot)$ for InfoNCE (\ref{fig:original_condition}) and our proposed AdvInfoNCE (\ref{fig:true_neg}-\ref{fig:false_neg}).
        The figures depict the user $u$ as a red dot, with items $i$ (positive), $j_1$ (true negative), and $j_2$ (false negative) depicted as points with varying shades of blue. 
        The color gradient reflects the underlying similarity between user $u$ and the items, while the shaded region delineates the feasible zone for its corresponding negative item.}
        \label{fig:ranking_criteria}
        \vspace{-10pt}
\end{figure*}

To better understand and reshape the contrastive loss specially tailored for top-$K$ recommendation, we aim to automatically and adversarially assign hardness to each negative user-item pair, which de facto concurrently enlarges the similarity discrepancy for hard negatives and loosens the ranking constraint for false negatives.
To be specific, the fundamental idea revolves around discerning a fine-grained degree of hardness for negative interactions, thereby yielding more profound insight into the underlying ranking \slh{criterion} for top-$K$ recommendation, as depicted in Figure \ref{fig:ranking_criteria}.
Notably, for a given user $u$, $\delta_j$ refers to the hardness of item $j$.
When $\delta_{j_1} > 0$ (as seen in Figure \ref{fig:true_neg}), item $j_1$ is identified as a hard negative, resulting in a significant contraction of the feasible representation space.
Conversely, when $\delta_{j_2} < 0$ (as seen in Figure \ref{fig:false_neg}), the ranking criterion is relaxed compared to the constraint of InfoNCE, allowing the recommender to mitigate the impact of false negative noise. 
\za{Nevertheless, the tasks of effectively distinguishing between the hard and false negatives, and of learning a proper fine-grained ranking of interactions continue to present significant challenges} \cite{HDCCF}. 

To this end, we incorporate a fine-grained hardness-aware ranking criterion and devise a slightly altered version of InfoNCE loss through adversarial training.
This modified contrastive loss coined \textbf{AdvInfoNCE}, is explicitly designed for robust top-$K$ recommendation. 
Specifically, we frame hardness learning as an adversarial optimization problem by designing a hardness mapping from interactions to hardness and iterating between hardness evaluation and CF recommender refinement.

Benefiting from this adversarial framework, our AdvInfoNCE endows two appealing properties.
On the one hand, it serves as a specialized contrastive loss for top-$K$ collaborative filtering that acknowledges the risk of false negatives while utilizing hard negative mining. 
Additionally, we theoretically demonstrate through adaptive gradients that AdvInfoNCE subtly employs informative negative sampling (\cf Appendix \ref{sec:app_negative_ming}).
On the other hand, we bridge the adversarial hardness learned by AdvInfoNCE with the ambiguity set in distributionally robust optimization (DRO), thereby naturally demonstrating its generalization ability and assuring its robust performance against the noise of false negatives (\cf \za{Theorem \ref{theorem}}).
This furnishes a potential theoretical rationalization for the exceptional robustness observed in contrastive learning.
Extensive experiments on the out-of-distribution benchmark datasets in top-$K$ recommendation tasks further highlight the capability of AdvInfoNCE in addressing the issue of biased observations.
We advocate for AdvInfoNCE to be considered as a significant reference loss for future contrastive learning research in CF and recommend its adoption as a standard loss in recommender systems.

\section{Preliminary of Contrastive Collaborative Filtering (CF)}

\textbf{Task Formulation.} \label{sec:task-formulation}
Personalized recommendations aim to retrieve a small subset of items from a large catalog to align with \wx{the user preference}.
Here we focus on a typical setting, collaborative filtering (CF) with implicit feedback (\eg click, purchase, view times, etc.), which can be framed as a semi-supervised top-$K$ recommendation problem \cite{Advance_CF}.
\wx{Wherein, the majority of all possible user-item interactions are unobserved, thus previous studies typically assign them with negative labels \cite{Advance_CF,BPR}.}
Let $\Set{O}^{+}=\{(u,i)|y_{ui}=1\}$ ($\Set{O}^{-}=\{(u,i)|y_{ui}=0\}$) be the observed (unobserved) interactions between users $\Set{U}$ and items $\Set{I}$, where $y_{ui}=1$ ($y_{ui}=0$) indicates that user $u\in\Set{U}$ has (has not) interacted with item $i\in\Set{I}$.
For convenience, let $\Set{I}_u^{+} = \{i|y_{ui}=1\}$ ($\Set{I}_u^{-} = \{j|y_{uj}=0\}$) denote the set of items that user $u$ has (has not) adopted before. 
To overcome the distribution shifts in real-world scenarios \cite{CausPref, InvCF}, the long-acting goal is to optimize a robust CF model $\hat{y}: \Space{U}\times\Space{I}\rightarrow\Space{R}$ capable of distilling the true preferences of users towards items.

\textbf{Modeling Strategy.}
Leading CF models \cite{SVD++, MultVAE, NGCF, PinSage} involve three main modules: a user behavior encoder $\psi_{\theta}(\cdot) :\Space{U}\rightarrow\Space{R}^d$, an item encoder $\phi_\theta(\cdot) :\Space{I}\rightarrow\Space{R}^d$, and a predefined similarity function $s(\cdot,\cdot): \Space{R}^d \times\Space{R}^d\rightarrow\Space{R}$, where $\theta$ is the set of all trainable parameters.
\wx{The encoders transfer the user and item into $d$-dimensional representations.}
The similarity function measures the \wx{similarity} between the user and item in \wx{the representation space}, whose widely-used implementations include dot product \cite{BPR}, cosine similarity \cite{RendleKZA20}, and neural networks \cite{NMF}.
For the sake of simplicity and better interpretation, we \wx{set} the similarity function in our paper \wx{as:}
\begin{equation}
    s(u,i) = \frac{1}{\tau} \cdot\frac{\Trans{\psi_\theta(u)}\phi_\theta(i)}{\norm{\psi_\theta(u)}\cdot\norm{\phi_\theta(i)}},
\end{equation}
in which $\tau$ is the hyper-parameter known as temperature \cite{tau}.

\textbf{Loss Function.}
Point- and pair-wise loss functions are \wx{widely used to optimize} the parameters $\theta$:
\begin{itemize}[leftmargin=*]
    \item Point-wise losses (\eg binary cross-entropy \cite{CrossEntropy,iCD}, mean square error \cite{SVD++}) typically treat observed interactions $\Set{O}^{+}$ as positive instances and all unobserved interactions $\Set{O}^{-}$ as negatives.
    CF methods equipped with point-wise loss naturally cast the problem of item recommendations into a binary classification or regression \slh{task}.
    However, due to the large scale of indiscriminative unobserved interactions, this type of loss function fails to effectively consider the ranking criterion nor efficiently handle \slh{false} negative instances \cite{Rendle22}.
    \item Pairwise loss functions (\eg BPR \cite{BPR}, WARP \cite{WSABIE}, pairwise hinge loss \cite{hinge_loss}) aim to \wx{differentiate} items in a specific relative order. 
    For instance, BPR assumes that the user prefers the positive item over unobserved items.
    Although effective, these loss functions share a common limitation that lacks sensitivity to label noise and false negatives \cite{HDCCF}.
\end{itemize}

Recent CF methods, inspired by the success of contrastive learning (CL) in self-supervised learning tasks, show a surge of interest in contrastive loss:
\begin{itemize}[leftmargin=*]
    \item Contrastive losses (\eg InfoNCE \cite{InfoNCE}, InfoL1O \cite{InfoL1O}, DCL \cite{DecoupledCL}, SupCon \cite{SupCon}) \wx{enforce the agreement between positive instances and the discrepancy between negative instances in the representation space.
    Separately treating the observed and unobserved interactions as positive and negative instances allows us to get rid of data augmentation and directly apply these losses in CF.
    Here we focus mainly on InfoNCE (also well-known as softmax loss in CF \cite{Advance_CF}), and our findings could be easily extended to other contrastive losses.
    Specifically,} given a user $u$ with a positive item $i$ and the set of items $\Set{I}_{u}^{-}$ that $u$ has not interacted with, \wx{InfoNCE essentially encourages the CF model to} satisfy the following ranking criterion:
    \begin{equation}
        \forall j\in \Set{I}_u^-, \quad i >_{u} j,
    \end{equation}
    where $>_{u}$ represents the personalized ranking order of user $u$. 
    We equivalently reformulate this ranking criterion in the semantic similarity space, as depicted in Figure \ref{fig:original_condition}:
    \begin{equation} \label{eq:InfoNCE_ranking}
        \forall j\in \Set{I}_u^-, \quad s(u,j)-s(u,i)\leq 0.
    \end{equation}
    Overall, the objective function of personalized recommendation with InfoNCE loss can be formulated as follows:
    \begin{gather}
    \Lapl_{\text{InfoNCE}} = -\sum_{(u,i)\in\Set{O}^{+}}\log\frac{\exp{(s(u,i))}}{\exp{(s(u,i))}+\sum_{j\in\Set{N}_{u}}\exp{(s(u,j))}},
    \end{gather}
    where $(u,i)\in\Set{O}^{+}$ is an observed interaction between user $u$ and item $i$. 
    In most cases, the number of unobserved items, \ie $|\Set{I}_{u}^{-}|$, can be extremely large, thus negative sampling becomes a prevalent solution.
    The standard process of negative sampling involves uniformly drawing a subset of negative instances, denoted by $\Set{N}_{u}$, from the unobserved item set $\Set{I}_{u}^{-}$.
\end{itemize}

\textbf{Discussion.}
\wx{Scrutinizing the ranking criterion in Equation \eqref{eq:InfoNCE_ranking}, we can easily find that InfoNCE cares about the positive item's relative orders with the negative items holistically, while ignoring the ranking relationships among these negatives.
Moreover, it simply treats all unobserved interactions as negative items, thus easily overlooking the risk of false negatives \cite{Advance_CF,BPR}.
Such oversights impede the performance and generalization ability of InfoNCE-empowered CF models.
To bridge the gap, we aim to devise a variant of InfoNCE by considering these factors.}

\section{Methodology of AdvInfoNCE}
On the foundation of InfoNCE loss, we first devise AdvInfoNCE, a robust contrastive loss tailor-made for Top-$K$ recommendation scenarios.
Then we present its desirable properties that underscore the efficacy of negative sampling and generalization ability across distribution shifts.
Additionally, we delve into the hard negative mining mechanism of AdvInfoNCE and its alignment with Top-$K$ ranking evaluation metrics in Appendix \ref{sec:app_negative_ming} and \ref{sec:app_topK}, respectively.

\subsection{Fine-grained Ranking Criterion}
\wx{Instead of} relying on the coarse relative ranking criterion presented in Equation \eqref{eq:InfoNCE_ranking}, we learn a fine-grained ranking criterion in Top-$K$ recommendation \wx{by incorporating varying degrees of user preferences} for different items into the loss function.
In other words, \wx{we need to treat the positive item as an anchor,} quantify the relative hardness of each item \wx{as compared with the anchor}, and differentiate between false negatives and hard negatives.
Here the hardness of a negative item is defined as its minimal similarity distance towards the positive anchor.

For simplicity, we consider a single observed user-item pair $(u,i) \in \Set{O}^+$ from now on.
Let us assume that for $(u,i)$, we have access to the oracle hardness $\delta_j$ of each negative item $j \in \Set{N}_{u}$.
\wx{By assigning} distinct hardness scores to different negatives, we reframe a hardness-aware fine-grained ranking criterion in the semantic similarity space as follows:
\begin{gather} \label{eq:ranking_AdvInfoNCE}
    \forall j \in \Set{N}_u, \quad s(u,j)-s(u,i) + \delta_{j} < 0.
\end{gather}
\wx{Incorporating the hardness scores enables us to exhibit the ranking among negative items (\ie the larger hardness score $\delta_{j}$ the item $j$ holds, the less likely the user $u$ would adopt it) and endow with better flexibility, compared with InfoNCE in Figure \ref{fig:original_condition}.
Specifically, as Figure \ref{fig:true_neg} shows, if $\delta_{j_1} > 0$, $j_1$ is identified as a hard negative item with a contracted feasible zone, which indicates a stringent constraint prompting the model to focus more on $j_1$.
Meanwhile, as Figure \ref{fig:false_neg} depicts, if $\delta_{j_2} < 0$, $j_2$ tends to be a false negative item and its feasible zone expands, which characterizes a more relaxed constraint.
Clearly,} the hardness degree $\delta_j$ encapsulates a fine-grained ranking criterion for user $u$, thereby facilitating \wx{more accurate recommendation.}

\subsection{Derivation of AdvInfoNCE}
The fine-grained ranking criterion for a single positive interaction $(u,i)$, as defined in Equation
\eqref{eq:ranking_AdvInfoNCE}, can be seamlessly transformed into the main part of our AdvInfoNCE:
\begin{align}
    \underbrace{\max\{0,\{s(u,j)-s(u,i) + \delta_{j}\}_{j\in\Set{N}_u}\}}_{\text{Fine-grained ranking criterion}} \approx \underbrace{- \log \{\frac{\exp(s(u,i))}{\exp(s(u,i))+\sum_{j=1}^{|\Set{N}_u|}\exp(\delta_{j})\exp(s(u,j)) }\}}_{\text{AdvInfoNCE}}  \label{eq:AdvInfoNCE}
\end{align}
\wx{InfoNCE is a special case of AdvInfoNCE, when $\forall j, \delta_j=0$.}
\wx{We briefly} outline the main steps of derivation \wx{(see Appendix \ref{sec:app_derivation_AdvInfoNCE} for the complete derivation)}:

\begin{derivation_out}
Based on the LogSumExp operator, \ie $\max(x_1,...,x_n) \approx \log(\sum_i \exp(x_i))$, the fine-grained ranking criterion could be approximated in a different form as  $\log (\exp(0)+\sum_{j=1}^{|\Set{N}_u|}\exp(s(u,j)-s(u,i) + \delta_{j}))$. 
Consequently, the right-hand side of Equation \eqref{eq:AdvInfoNCE} can be derived by factoring out the negative sign of the logarithm and reorganizing the term.
\end{derivation_out}
We would like to highlight the straightforward nature of this derivation, \wx{which is }achieved without the need for assumptions, \wx{but} the widely-used LogSumExp operator \cite{epsilon-SupInfoNCE}. 

However, the oracle hardness $\delta_j$ is not accessible or measurable in real-world scenarios.
The challenge, hence, comes to how to automatically learn the appropriate hardness for each negative instance.
\wx{Inspired by the adversarial training \cite{Adversarial}, we exploit a min-max game, which allows to train the model alternatively between the prediction of hardness and the refinement of CF model.}
Formally, taking into account all observed interactions, we cast the AdvInfoNCE learning framework as the following optimization problem:
\begin{align}
    \min_{\theta} \Lapl_{\text{AdvInfoNCE}} = \min_\theta \max_{\Delta\in \Space{C}(\eta)} -\sum_{(u,i)\in\Set{O}^{+}} \log\frac{\exp{(s(u,i))}}{\exp{(s(u,i))}+\sum_{j=1}^{|\Set{N}_u|}\exp(\delta_{j}^{(u,i)})\exp(s(u,j))}
\end{align}
Here, with a slight abuse of notation, let $\delta_j^{(u,i)}$ denote the relative hardness of negative item $j$ in contrast with interaction $(u, i)$, and $\Delta$ represent the collective set of all $\delta_j^{(u, i)}$.
The definition and enlightening explanation of $\Space{C}(\eta)$ will be presented in Section \ref{sec:properties}. 

\subsection{In-depth Analysis of AdvInfoNCE} \label{sec:properties}
In spite of the adversarial training framework of AdvInfoNCE, the theoretical understanding in terms of its inherent generalization ability remains untouched. 
In this section, we study \wx{it} through the lens of distributionally robust optimization (DRO) \cite{DRO_KL} over negative sampling.
From this perspective, AdvInfoNCE focuses on the worst-case distribution over high-quality hard negative sampling.

\begin{restatable}{thm}{DROtheorem} \label{theorem}
We define $\delta_j^{(u,i)} \doteq \log(|\Set{N}_u|\cdot p(j|(u,i)))$, where $p(j|(u,i))$ is the probability of sampling negative item $j$ for observed interaction $(u,i)$. 
Then, optimizing AdvInfoNCE loss is \textbf{equivalent} to solving Kullback-Leibler (KL) divergence-constrained distributionally robust optimization (DRO) problems over negative sampling: 
\begin{align}
    \min_\theta\Lapl_{\text{AdvInfoNCE}} \, \Longleftrightarrow  \min_\theta\max_{p(j|(u,i))\in \Space{P}} \mathbb{E}_P [\exp(s(u,j)-s(u,i)): \Set{D}_{KL}(P_0||P) \leq \eta]
\end{align}
where $P_0$ stands for the distribution of uniformly drawn negative samples, \ie $p_0(j|(u,i)) = \frac{1}{|\Set{N}_u|}$;
$P$ denotes the distribution of negative sampling $p(j|(u,i))$. 
\end{restatable}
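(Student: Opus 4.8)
The plan is to establish the equivalence by a \emph{direct change of variables}, which sidesteps the Lagrangian-duality argument typical of DRO derivations. The key observation is that the reparametrization $\delta_j^{(u,i)}=\log\!\big(|\Set{N}_u|\cdot p(j|(u,i))\big)$ is a bijection between admissible hardness vectors $\Delta$ and negative-sampling distributions $P$; under this bijection the inner adversarial AdvInfoNCE term turns into a strictly increasing function of the DRO worst-case objective, and monotonicity does the rest. No strong-duality machinery is needed precisely because, after the reparametrization, the adversary's decision variable \emph{is} a distribution.

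Concretely, I would first fix one observed pair $(u,i)\in\Set{O}^{+}$ and rewrite its AdvInfoNCE summand. Substituting $\exp(\delta_j^{(u,i)})=|\Set{N}_u|\cdot p(j|(u,i))$ into the denominator gives $\sum_j\exp(\delta_j^{(u,i)})\exp(s(u,j))=|\Set{N}_u|\,\mathbb{E}_P[\exp(s(u,j))]$, so that, dividing numerator and denominator by $\exp(s(u,i))$ and pulling the minus sign inside the logarithm, the summand equals $\log\!\big(1+|\Set{N}_u|\,\mathbb{E}_P[\exp(s(u,j)-s(u,i))]\big)$ --- which recovers InfoNCE exactly when $P=P_0$ is uniform, i.e.\ $\delta_j^{(u,i)}=0$. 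Next I would identify the feasible set: requiring $p(\cdot|(u,i))$ to be a probability vector forces the normalization $\sum_j\exp(\delta_j^{(u,i)})=|\Set{N}_u|$, and a one-line computation gives $\Set{D}_{KL}(P_0\|P)=\tfrac{1}{|\Set{N}_u|}\sum_j\log\tfrac{1/|\Set{N}_u|}{p(j|(u,i))}=-\tfrac{1}{|\Set{N}_u|}\sum_j\delta_j^{(u,i)}$, so that the ball $\Set{D}_{KL}(P_0\|P)\le\eta$ is precisely the defining inequality of $\Space{C}(\eta)$ together with that normalization. Hence $\{\Delta\in\Space{C}(\eta)\}$ and $\{P:\Set{D}_{KL}(P_0\|P)\le\eta\}$ are in bijective correspondence, interaction by interaction.

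Finally, writing $V_{(u,i)}(\theta)=\max_{P:\Set{D}_{KL}(P_0\|P)\le\eta}\mathbb{E}_P[\exp(s(u,j)-s(u,i))]$ for the DRO worst-case expectation, I would note that $x\mapsto\log(1+|\Set{N}_u|x)$ is strictly increasing, so the adversarial maximization of the summand over $\Delta\in\Space{C}(\eta)$ is attained at the same worst-case distribution $P^{\star}$ as $V_{(u,i)}(\theta)$, and its maximized value is exactly $\log(1+|\Set{N}_u|V_{(u,i)}(\theta))$. Summing over $(u,i)\in\Set{O}^{+}$ and minimizing over $\theta$ then yields the asserted equivalence, read term by term over observed interactions --- which is why the statement refers to DRO \emph{problems} in the plural.

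I expect the delicate points to be twofold. First, one must verify that the normalization $\sum_j\exp(\delta_j^{(u,i)})=|\Set{N}_u|$ is genuinely built into $\Space{C}(\eta)$; without it the adversary's feasible set would not consist of honest distributions and the bijection above would break. Second, one should be explicit about the sense in which the outer $\min_\theta$ transfers: since each per-interaction map $\log(1+|\Set{N}_u|\cdot)$ is nonlinear, $\sum_{(u,i)}\log(1+|\Set{N}_u|V_{(u,i)}(\theta))$ and $\sum_{(u,i)}V_{(u,i)}(\theta)$ are not literally the same function of $\theta$, so the equivalence is cleanest stated at the level of each interaction's problem, with monotonicity invoked to license the passage from ``minimize the contrastive surrogate'' to ``minimize the worst-case negative-sampling risk.'' A useful sanity check along the way is the boundary case $\eta=0$, which should force $P=P_0$ and collapse AdvInfoNCE back to InfoNCE.
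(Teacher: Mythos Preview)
Your proposal is correct and follows essentially the same route as the paper: the authors also perform the direct change of variables $\exp(\delta_j^{(u,i)})=|\Set{N}_u|\,p(j|(u,i))$, compute $\tfrac{1}{|\Set{N}_u|}\sum_j\delta_j^{(u,i)}=-\Set{D}_{KL}(P_0\|P)$ to identify $\Space{C}(\eta)$ with the KL-ball, and then pass from the log-form to the expectation via the $\Longleftrightarrow$ step. If anything you are more explicit than the paper about the monotonicity of $x\mapsto\log(1+|\Set{N}_u|x)$ and about the two caveats you flag (normalization inside $\Space{C}(\eta)$ and the per-interaction nonlinearity), which the paper handles tacitly.
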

For brevity, we present a proof sketch summarizing the key idea here and delegate the full proof to Appendix \ref{sec:app_dro_theorem}.

\begin{proof}
\wx{By defining} $\delta_j^{(u,i)} \doteq \log(|\Set{N}_u|\cdot p(j|(u,i)))$, \wx{we have} $\frac{1}{|\Set{N}_u|}\sum_{j=1}^{|\Set{N}_u|} \delta_l^{(u,i)} = - \Set{D}_{KL}(P_0||P)$.
Therefore, by rearranging the terms, optimizing AdvInfoNCE can be restated as a standard DRO problem:
\begin{align}
    \min_\theta\Lapl_{\text{AdvInfoNCE}} & =\min_\theta \max_{\Delta \in \Set{C}(\eta)} \sum_{(u,i)\in\Set{O}^{+}} \log (1+|\Set{N}_u| \sum_{j=1}^{|\Set{N}_u|} p(j|(u,i))\exp(s(u,j)-s(u,i))) \nonumber \\ 
    & \Longleftrightarrow \min_\theta \sum_{(u,i)\in\Set{O}^{+}} \sup_{p(j|(u,i))\in \Space{P}} \mathbb{E}_P [\exp(s(u,j)-s(u,i))], \, \textrm{s.t.} ~ \Set{D}_{KL}(P_0||P) \leq \eta \nonumber
\end{align}
\end{proof}

We now turn our attention to the role of $\Set{C}(\eta)$ and its relationship with the ambiguity set $\Space{P}$ in DRO. 
The ambiguity set $\Space{P}$ is formulated by requiring the distribution of negative sampling to fall within a certain $\eta$ distance from the uniform distribution, as defined below: 
\begin{align}
    \Space{P} = \{p(j|(u,i))\in (\frac{1}{|\Set{N}_u|}-\epsilon, \frac{1}{|\Set{N}_u|}+\epsilon): \Set{D}_{KL}(P_0||P) \leq \eta = -log(1-\frac{\epsilon^2}{|\Set{N}_u|})\}
\end{align}
where $\epsilon$ serves as the hyperparameter to regulate the deviation of the negative sampling distribution from the uniform distribution. 
In terms of implementation, $\epsilon$ is controlled by the number of adversarial training epochs (see Algorithm \ref{alg:AdvInfoNCE}).
Clearly, $\Set{C}(\eta)$ is an equal counterpart to the ambiguity set $\Space{P}$.

\textbf{Discussion.}
Grounded by theoretical proof, we understand the generalization ability of our AdvInfoNCE through the lens of DRO over informative negative sampling. 
In essence, by learning the hardness in an adversarial manner, AdvInfoNCE effectively enhances the robustness of the CF recommender.
Moreover, apart from the robustness of AdvInfoNCE (see Section \ref{sec:properties}), we also identify its hard negative mining mechanism through gradients analysis (see Appendix \ref{sec:app_negative_ming}) and its alignment with top-$K$ ranking evaluation metrics (see Appendix \ref{sec:app_topK}).
Specifically, the gradients concerning negative samples are proportional to the hardness term $\exp(\delta_j)$, indicating that AdvInfoNCE is a hardness-aware loss.
To some extent, our AdvInfoNCE is equivalent to the widely-adopted Discounted Cumulative Grain (DCG) ranking metric.

\textbf{Limitation.}
AdvInfoNCE employs end-to-end negative sampling in an adversarial manner, compelling the recommender to ensure robustness from the worst-case scenario. 
Despite its empirical success and desirable properties, AdvInfoNCE is not immune to the limitations of adversarial training, which is well-known for its potential training instability.

\section{Experiments} \label{sec:experiments}
We aim to answer the following research questions:
\begin{itemize}[leftmargin=*]
   \item \textbf{RQ1:} How does AdvInfoNCE perform compared with other CL-based CF methods?
   \item \textbf{RQ2:} Can AdvInfoNCE effectively learn the fine-grained ranking criterion?
  What are the impacts of the component \za{$\epsilon$ (\ie the number of adversarial training epochs)} on AdvInfoNCE?
\end{itemize}

\textbf{Baselines.} 
Two \za{high-performing encoders} - ID-based (MF \cite{MF}) and graph-based (LightGCN \cite{LightGCN}), are selected \za{as CF backbone models}. 
We thoroughly compare AdvInfoNCE with two categories of \za{the latest} CL-based CF methods: augmentation-based baselines (SGL \cite{SGL}, NCL \cite{NCL}, XSimGCL \cite{XSimGCL}) and loss-based baselines (CCL \cite{CCL}, BC Loss \cite{BC-loss}, Adap-$\tau$ \cite{Adap-tau}).
\za{See detailed introduction and comparison of baselines in Appendix \ref{sec:related_work}.}

\textbf{Datasets.}
To verify the generalization ability of AdvInfoNCE, we conduct extensive experiments on three standard unbiased datasets  (KuaiRec \cite{kuairec}, Yahoo!R3 \cite{yahoo}, Coat \cite{coat}) and one synthetic dataset (Tencent \cite{InvCF}). 
We employ three commonly used metrics (Hit Ratio (HR@$K$), Recall@$K$, Normalized Discounted Cumulative Gain (NDCG@$K$)) for evaluation, with $K$ set by default at 20.
\za{Please refer to Appendix \ref{sec:abb_experiments} and \ref{sec:abb_studies} for more experimental results over additional backbones, additional baselines, and an intuitive understanding of AdvInfoNCE.}
\textbf{\subsection{Overall Performance Comparison (RQ1)}}
\subsubsection{\textbf{Evaluations on Unbiased Datasets}}
\textbf{Motivation.} 
Evaluating the efficacy \za{and generalization ability} of CF models \za{based} on partially observed interactions collected from \za{existing} recommender system is rather challenging \cite{Missing-Not-At-Random}. 
\za{This is primarily due to the exposure bias of the recommender system confounding the observed data. }\za{To address this issue}, some researchers propose \za{using} unbiased evaluations \cite{yahoo, coat}. 
\za{In these datasets,} items \za{in the test set} are randomly exposed to users, \slh{while partially observed interactions are still used for training}. 
The distribution shift between training and testing enables a fair evaluation of the generalization ability of CF methods in out-of-distribution tasks. 
\begin{table*}[t]
    \centering
    \caption{The performance comparison on unbiased datasets \za{over the LightGCN backbone}. The improvement achieved by AdvInfoNCE is significant ($p$-value $<<$ 0.05).}
    \label{tab:main_table}
    \vspace{-5pt}
    \resizebox{1\linewidth}{!}{
    \begin{tabular}{l|ll|ll|ll}
    \toprule
     & \multicolumn{2}{c|}{KuaiRec} & \multicolumn{2}{c|}{Yahoo!R3}  & \multicolumn{2}{c}{Coat} \\
    \multicolumn{1}{c|}{} & \multicolumn{1}{c}{Recall} & \multicolumn{1}{c|}{NDCG}  & \multicolumn{1}{c}{Recall} & \multicolumn{1}{c|}{NDCG} & \multicolumn{1}{c}{Recall} & \multicolumn{1}{c}{NDCG}  \\\midrule

BPR (\citeauthor{BPR}, \citeyear{BPR})  & $0.1652$ & $0.3905$ & $0.1487$ & $0.0697$ & $0.2737$ & $0.1707$  \\
InfoNCE (\citeauthor{InfoNCE}, \citeyear{InfoNCE}) & $0.1800$ & $0.4529$ & $0.1475$ & $0.0698$ & $0.2689$ & $0.1882$\\\midrule
SGL (\citeauthor{SGL}, \citeyear{SGL})  & $ 0.1829^{\color{+}+ 1.61\%}$ & $\underline{0.4583}^{\color{+}+ 1.19\%}$ & $ 0.1474^{\color{-}-0.07\%}$ & $ 0.0692^{\color{-}-0.86\%}$ & $ 0.2737^{\color{+}+1.79\%}$ & $ 0.1716^{\color{-}- 8.82\%}$ \\
NCL (\citeauthor{NCL}, \citeyear{NCL})  & $ 0.1764^{\color{-}-2.00\%}$ & $ 0.4478^{\color{-}-1.13\%}$ & $ 0.1407^{\color{-}-4.61\%}$ & $ 0.0669^{\color{-}-4.15\%}$ & $ \underline{0.2778}^{\color{+}+3.31\%}$ & $ 0.1739^{\color{-}-7.60\%}$  \\

XSimGCL (\citeauthor{XSimGCL}, \citeyear{XSimGCL})   & $ \underline{0.1907}^{\color{+}+5.94\%}$ & $ 0.4531^{\color{+}+0.04\%}$ & $ 0.1503^{\color{+}+1.90\%}$ & $ \underline{0.0708}^{\color{+}+1.43\%}$ & $ 0.2729^{\color{+}+1.49\%}$ &$ 0.1655^{\color{-}-12.06\%}$ \\\midrule
CCL (\citeauthor{CCL}, \citeyear{CCL})  & $ 0.1776^{\color{-}-1.33\%}$ & $ 0.4497^{\color{-}-0.71\%}$ & $ 0.1453^{\color{-}-1.49\%}$ & $ 0.0676^{\color{-}-3.15\%}$ & $ 0.2732^{\color{+}+1.60\%}$ & $ 0.1941^{\color{+}+3.13\%}$ \\
BC Loss (\citeauthor{BC-loss}, \citeyear{BC-loss})  & $ 0.1799^{\color{-}-0.06\%}$ & $ 0.4417^{\color{-}-2.47\%}$ & $ 0.1498^{\color{+}+1.56\%}$ & $ 0.0703^{\color{+}+0.72\%}$ & $ 0.2719^{\color{+}+1.12\%}$ & $ 0.1921^{\color{+}+2.07\%}$ \\
Adap-$\tau$ (\citeauthor{Adap-tau}, \citeyear{Adap-tau})   & $ 0.1717^{\color{-}-4.61\%}$ & $ 0.4323^{\color{-}-4.55\%}$ & $ \underline{0.1516}^{\color{+}+2.78\%}$ & $ 0.0704^{\color{+}+0.86\%}$ & $ 0.2700^{\color{+}+0.41\%}$ & $ \underline{0.1957}^{\color{+}+3.99\%}$ \\
\textbf{AdvInfoNCE} & $ \textbf{0.1979*}^{\color{+} + 9.94\%}$ & $ \textbf{0.4697*}^{\color{+}+ 3.71\%}$ & $ \textbf{0.1527*}^{\color{+}+3.53\%}$ & $ \slh{\textbf{0.0718*}^{\color{+}+2.87\%}}$ & $ \textbf{0.2846*}^{\color{+}+5.84\%}$ & $ \textbf{0.2026*}^{\color{+}+7.65\%}$ \\\bottomrule

    
    \end{tabular}}
     \vspace{-10pt}
  \end{table*} 

\textbf{Results.} 
Tables \ref{tab:main_table} and \ref{tab:unbiased_mf} present a comparative analysis of the performance on standard unbiased datasets over the LightGCN and MF backbones. 
The best-performing methods per test are bold and starred, while the second-best methods are underlined. 
The red and blue percentages respectively refer to the increase and decrease of CF methods relative to InfoNCE in each metric. 
We observe that:
\begin{itemize}[leftmargin=*]
   \item  \textbf{CL-based CF methods have driven impressive performance breakthroughs compared with BPR in most evaluation metrics.}
   This aligns with findings from prior research \cite{XSimGCL, BC-loss} that contrastive loss significantly boosts the generalization ability of CF methods by providing informative contrastive signals.
   In contrast, BPR loss, which focuses on pairwise relative ranking, often suffers from ineffective negative sampling \cite{BPR} and positive-unlabeled issues \cite{DR}.
   \item \textbf{Contrastive loss, rather than graph augmentation, plays a more significant role in enhancing the generalization ability of CF models.}
   We categorize CL-based CF baselines into two research lines: one adopts user-item bipartite graph augmentation along with augmented views as positive signals (referred to as augmentation-based), and another focuses solely on modifying contrastive loss using interacted items as positive instances (referred to as loss-based).
   Surprisingly, our results reveal no substantial difference between the performance of augmentation- and loss-based methods.
   This finding provides further evidence in support of the claim made in prior studies \cite{XSimGCL,SimGCL} that, specifically in CF tasks, the contrastive loss itself is the predominant factor contributing to the enhancement of a CF model's generalization ability. 
   \item \textbf{AdvInfoNCE consistently outperforms the state-of-the-art CL-based CF baselines in terms of all metrics on all unbiased datasets.} 
   AdvInfoNCE shows \za{an improvement ranging from} \slh{3.53}\% to 9.94\% on Recall@20 compared to InfoNCE. 
   \za{Notably}, AdvInfoNCE gains the most on the fully-exposed dataset, KuaiRec, which is \za{considered} as an ideal offline A/B testing.
   \za{In contrast, most of the CL-based CF methods underperform relative to InfoNCE and behave in an unstable manner when the testing distribution shifts.} This clearly indicates that AdvInfoNCE greatly enhances the robustness of the CF model in real-world scenarios. 
   We attribute this improvement to the fine-grained ranking criterion \za{of AdvInfoNCE} that automatically differentiates hard and false negative samples. 
   Additionally, as shown in Table \ref{tab:complexity}, AdvInfoNCE only adds negligible time complexity compared to \za{InfoNCE}.
\end{itemize}
\subsubsection{\textbf{Evaluations on Various Out-of-distribution Settings}}
\textbf{Motivation.} 
In real-world scenarios, recommender systems may confront diverse, unpredictable, and unknown distribution shifts. 
We believe that a good recommender \za{with powerful generalization ability} should be able to handle various degrees of distribution shift. 
Following previous work \cite{InvCF}, we \za{partition} the Tencent dataset into three test sets with different out-of-distribution degrees.
\za{Here, a higher $\gamma$ value signifies a greater divergence in distribution shift (for more details on the experimental settings, see Appendix \ref{sec:exp_settings}). }
We evaluate all CF methods with identical model parameters across all test sets, without knowing any prior knowledge of the test distribution beforehand.

\begin{table*}[t]
    \centering
    \caption{The performance comparison on the Tencent dataset \za{over the LightGCN backbone}. The improvement achieved by AdvInfoNCE is significant ($p$-value $<<$ 0.05).}
    \label{tab:synthetic}
    \vspace{-5pt}
    \resizebox{1\linewidth}{!}{
    \begin{tabular}{l|ccc|ccc|ccc|ccc}
    \toprule
     & \multicolumn{3}{c|}{$\gamma$ = 200} & \multicolumn{3}{c|}{$\gamma$ = 10}  & \multicolumn{3}{c|}{$\gamma$ = 2} & \multicolumn{1}{c}{Validation}\\
    \multicolumn{1}{c|}{} & HR & Recall & NDCG & HR & Recall & NDCG & HR & Recall & NDCG & NDCG\\\midrule

BPR (\citeauthor{BPR}, \citeyear{BPR}) & $0.1141$ & $0.0416$ & $0.0233$ & $0.0720$& $0.0262$ & $0.0146$ &$0.0501$ & $0.0186$ & $0.0109$ & $0.0673$ \\
InfoNCE (\citeauthor{InfoNCE}, \citeyear{InfoNCE}) & $0.1486$ & $0.0540$ & $0.0320$ & $0.0924$ & $0.0332$ & $0.0195$ & $0.0646$ & $0.0242$ & $0.0145$ & $0.0854$ \\\midrule
SGL (\citeauthor{SGL}, \citeyear{SGL})  & $0.1227$ & $ 0.0455 $ & $ 0.0249 $ & $0.0756$ & $ 0.0281$ & $ 0.0157$
          & $0.0517$& $ 0.0198$ & $ 0.0113$  & $ 0.0729$ \\
NCL (\citeauthor{NCL}, \citeyear{NCL}) & $0.1132$ & $ 0.0413$ & $ 0.0226$ & $0.0734$ & $ 0.0274$ & $ 0.0145$ &$0.0511$ & $ 0.0189$ & $ 0.0123 $  & $ 0.0669$ \\

XSimGCL (\citeauthor{XSimGCL}, \citeyear{XSimGCL})  & $0.1392$  & $ 0.0501$ & $ 0.0273$ & $0.0830$ & $ 0.0297$ & $ 0.0160$ & $0.0539$ & $ 0.0203$ &$ 0.0110$  & $ 0.0873$ \\\midrule
CCL (\citeauthor{CCL}, \citeyear{CCL}) & $0.1372$ & $ 0.0516$ & $ 0.0318$ & $0.0925$ & $ 0.0350 $ & $ \underline{0.0221}$ & $\underline{0.0683}$ & $ \underline{0.0266}$ & $ \underline{0.0170}$ & $ 0.0782$ \\
BC Loss (\citeauthor{BC-loss}, \citeyear{BC-loss}) & $\underline{0.1513}$   & $ \underline{0.0562}$ & $ \underline{0.0341}$ & $\underline{0.0984}$ & $ \underline{0.0366}$ & $ 0.0216$ & $0.0682$ & $ 0.0260$ & $ 0.0164$  & $ 0.0817$ \\

Adap-$\tau$ (\citeauthor{Adap-tau}, \citeyear{Adap-tau})  & $0.1488$ & $0.0537$ & $0.0317$ & $0.0940$ & $ 0.0338$ & $ 0.0200$ & $ 0.0642$ & $ 0.0239$ & $ 0.0143$ & $ 0.0852$\\
\textbf{AdvInfoNCE}  & $\textbf{0.1600*}$ & $ \textbf{0.0594*}$ & $ \textbf{0.0356*}$ &$\textbf{0.1087*}$ & $ \textbf{0.0403*}$ & $ \textbf{0.0243*}$ &$\textbf{0.0774*}$& $ \textbf{0.0295*}$ & $ \textbf{0.0180*}$  & $ 0.0879 $\\\midrule

Imp.\% over the strongest baselines & $5.74\%$  & $5.75\%$ & $4.53\%$ &$10.50\%$ & $10.11\%$ & $9.95\%$ &$13.32\%$ & $10.90\%$ & $5.88\%$ & $-$\\\midrule

Imp.\% over InfoNCE   & $7.67\%$ & $10.00\%$ & $11.25\%$ & $17.64\%$ & $21.39\%$ & $24.62\%$ & $19.81\%$ & $21.90\%$ & $24.14\%$ & $-$ \\\bottomrule

\end{tabular}}
\end{table*} 

\textbf{Results.} 
As illustrated in Tables \ref{tab:synthetic} and \ref{tab:synthetic_mf}, AdvInfoNCE yields a consistent boost compared to all SOTA baselines across all levels of out-of-distribution test sets. 
In particular, AdvInfoNCE substantially improves Recall@20 by 10.00\% to 21.90\% compared to InfoNCE and by 5.75\% to 10.90\% compared to the second-best baseline.
It's worth noting that AdvInfoNCE gains greater improvement on test sets with higher distribution shifts, while all CL-based CF baselines suffer significant performance drops due to these severe distribution shifts.
We attribute this drop to the coarse relative ranking criterion which injects erroneously recognized negative instances into contrastive signals. 
In contrast, benefiting from the effective adversarial learning of the fine-grained ranking criterion, AdvInfoNCE can achieve a 24.14\% improvement over InfoNCE \wrt NDCG@20 when $\gamma=2$, fully stimulating the potential of contrastive loss. 
These results validate the strong generalization ability of AdvInfoNCE as demonstrated in Theorem \ref{theorem}, proving that optimizing AdvInfoNCE is equivalent to solving DRO problems constrained by KL divergence over high-quality negative sampling.
\textbf{\subsection{Study on AdvInfoNCE (RQ2)}}

\subsubsection{\textbf{Effect of Hardness}} \label{sec:hardness_tencent}

\begin{figure*}[t]
	\centering
        \subcaptionbox{Distribution of hardness
        \label{fig:p_distribution_mean}}{
	    \vspace{-6pt}
		\includegraphics[width=0.324\linewidth]{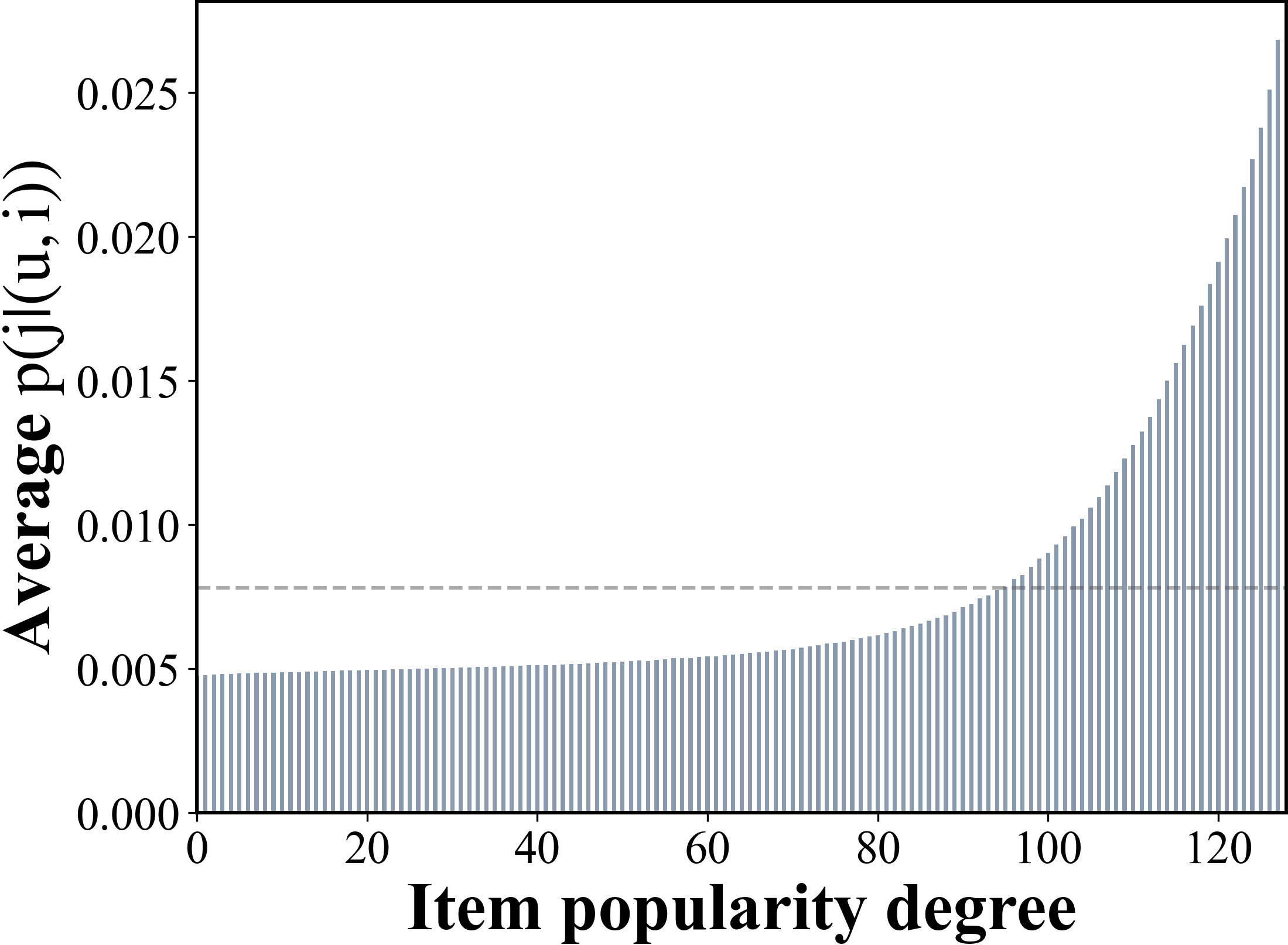}}
  \subcaptionbox{Varying hardness strategies \label{fig:Tencent_abl_recall_2}}{
	    \vspace{-6pt}
		\includegraphics[width=0.324\linewidth]{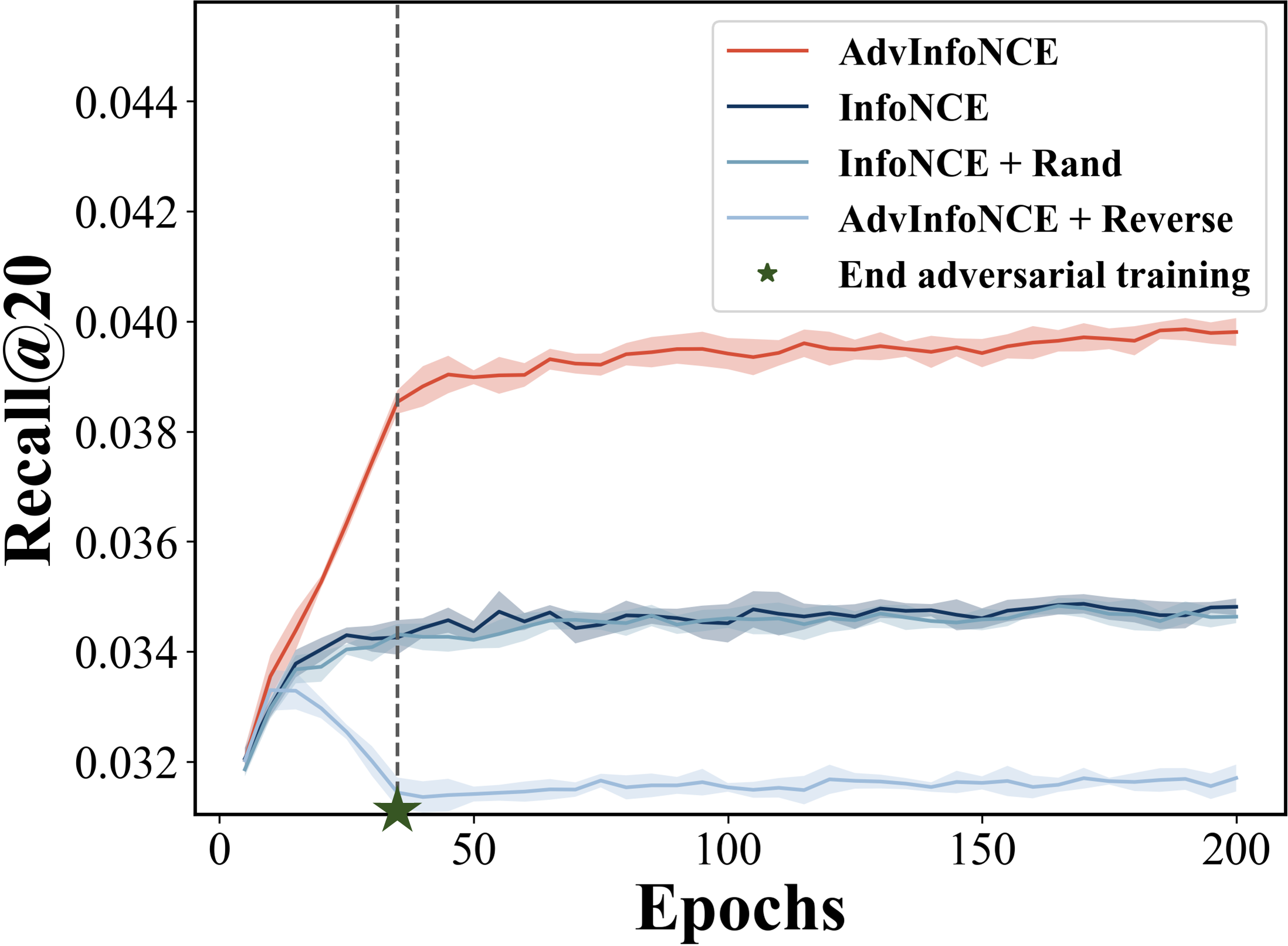}}
	\subcaptionbox{Alignment \& uniformity analysis\label{fig:Tencent_uniformity}}{
	    \vspace{-6pt}
		\includegraphics[width=0.32\linewidth]{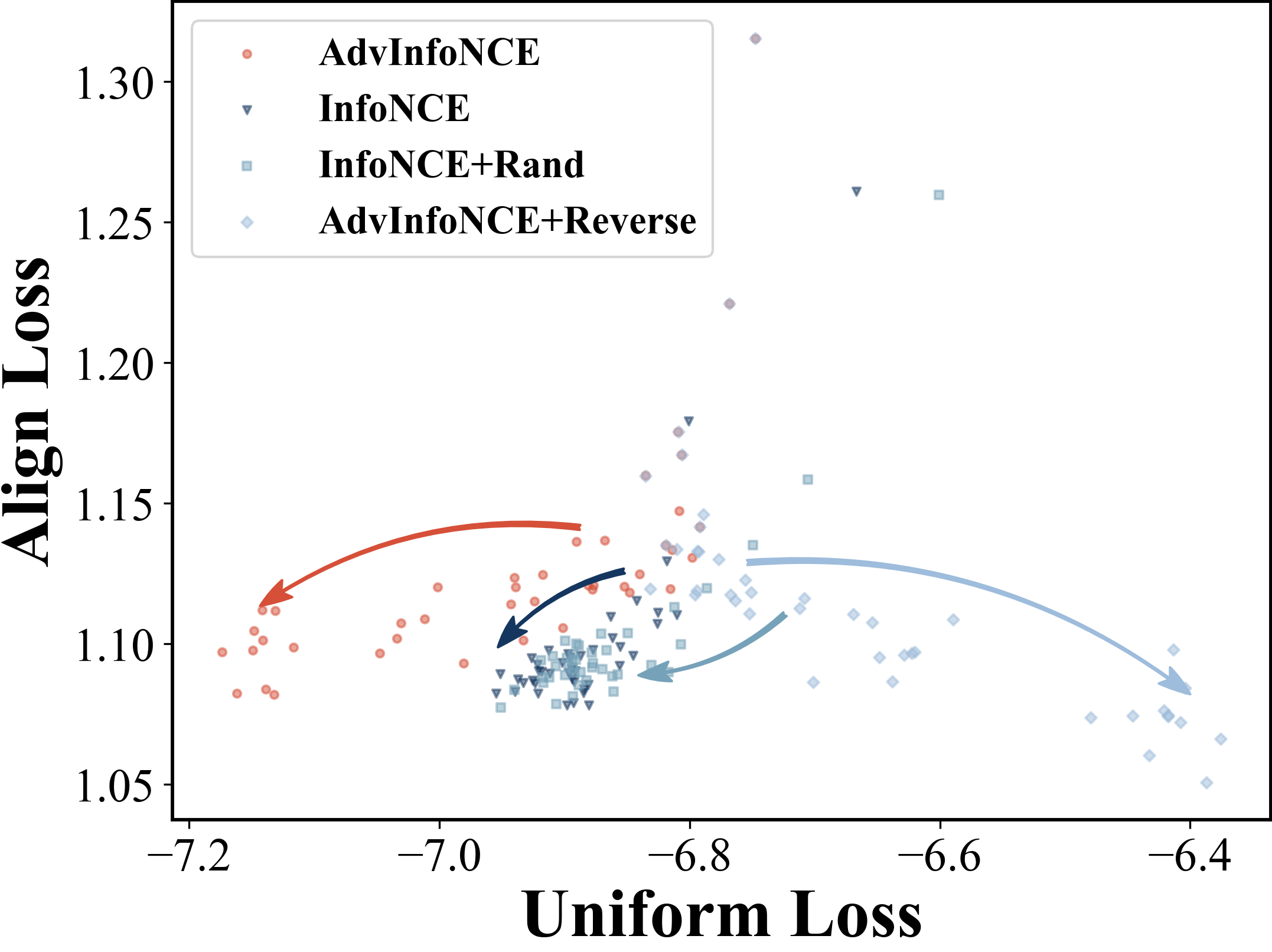}}

    \caption{Study of hardness. 
    (\ref{fig:p_distribution_mean}) Illustration of hardness \ie the probability of negative sampling ($p\left(j|\left(u,i\right)\right)$) learned by AdvInfoNCE \wrt item popularity on Tencent.
    The dashed line represents the uniform distribution.
    (\ref{fig:Tencent_abl_recall_2}) 
    Performance comparisons with varying hardness learning strategies on Tencent ($\gamma = 10$).
    (\ref{fig:Tencent_uniformity}) The trajectories of align loss and uniform loss during training progress. 
    Lower values indicate better performance.
    Arrows denote the losses' changing directions.}
	\label{fig:ablation}
	\vspace{-10pt}
\end{figure*}

\textbf{Motivation.}
To evaluate the effectiveness of the fine-grained hardness-aware ranking criterion learned by AdvInfoNCE, and further investigate whether it truly enhances the generalization ability of CF models, we conduct a comprehensive analysis on the Tencent dataset.
This analysis contains three-fold: it studies the distribution of hardness (\ie examining the negative sampling strategy of AdvInfoNCE); the performance trend over varying hardness learning strategies; and conducting an analysis of the alignment and uniformity properties of representations over varying hardness learning strategies.
Additional experiments on the KuaiRec dataset, exhibiting similar trends and findings, can be found in Appendix \ref{sec:ablation_study}.


\textbf{Distribution of hardness, \aka negative sampling strategy.} 
As elucidated in Theorem \ref{theorem}, learning the hardness $\delta_j^{(u,i)}$ in an adversarial manner essentially conducts negative sampling.
Inspired by item popularity-based negative sampling strategies \cite{PNS, negative_sampling,ImpRendle}, we probe into the relationship between adversarial negative sampling of AdvInfoNCE and the popularity of negative items. 
Figure \ref{fig:p_distribution_mean} depicts a skewed negative sampling distribution that focuses more on popular negative items and de-emphasizes the hardness of unpopular negative items, which are likely to be false negatives.
These results highlight two findings.
Firstly, in line with \cite{negative_sampling}, only a minor portion of negative instances are potentially important for model learning, while a vast number of negatives are false negatives.
Secondly, more frequent items constitute better hard negatives, which is consistent with the findings in previous work \cite{PNS}.
This popularity-correlated negative sampling validates that AdvInfoNCE automatically mitigates the influence of popularity bias during training, alleviates the overlook of false negatives in CF, and utilizes high-quality hard negative mining.

\textbf{Performance over varying hardness learning strategies.}
Figure \ref{fig:Tencent_abl_recall_2} clearly demonstrates that distilling the fine-grained hardness of negative items (AdvInfoNCE) is excelling over the uniformly random hardness (\slh{InfoNCE+Rand}), equal hardness (InfoNCE), and reversed fine-grained hardness (AdvInfoNCE+Reverse) throughout the training process.
This validates AdvInfoNCE's ability to simultaneously filter and up-weight more informative hard negatives while identifying and down-weighting the potential false negatives.
These desired properties further enable AdvInfoNCE to extract high-quality negative signals from a large amount of unobserved data.

\textbf{Alignment and uniformity analysis.}
It is widely accepted that both alignment and uniformity are necessary for a good representation \cite{alignment_uniformity}.
We, therefore, study the behavior of AdvInfoNCE through the lens of align loss and uniform loss, as shown in Figure  \ref{fig:Tencent_uniformity}.
AdvInfoNCE improves uniformity while maintaining a similar level of alignment as InfoNCE, whereas reversed fine-grained hardness significantly reduces representation uniformity. 
These findings underscore the importance of fine-grained hardness, suggesting that AdvInfoNCE learns more generalized representations.

\subsubsection{\textbf{Effect of the Adversarial Training Epochs}}
\begin{figure*}[t]
	\centering
	\subcaptionbox{End at early-stage \label{fig:small_eta}}{
	    \vspace{-6pt}
		\includegraphics[width=0.32\linewidth]{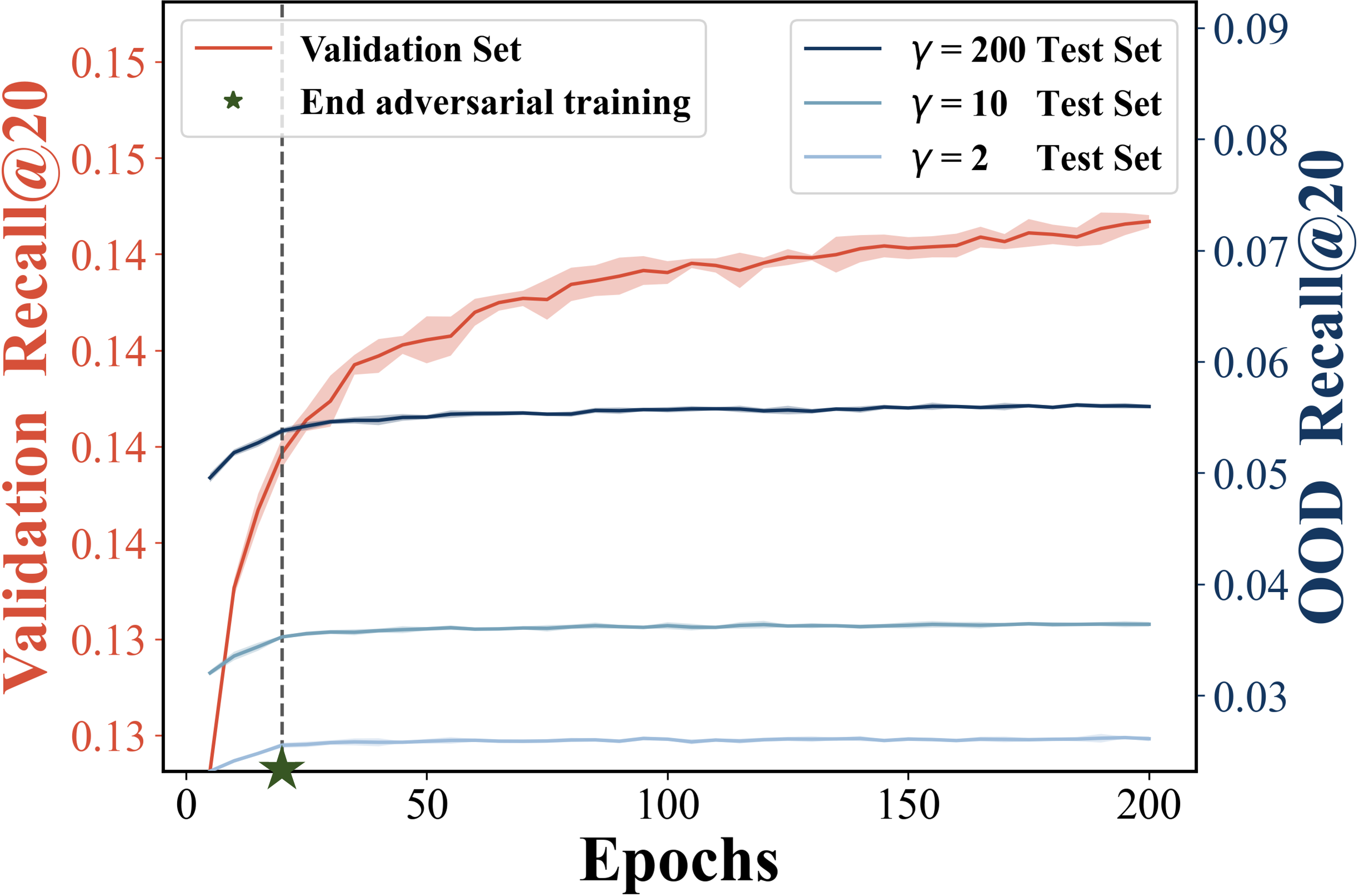}}
	\subcaptionbox{End at mid-stage \label{fig:large_eta}}{
	    \vspace{-6pt}
		\includegraphics[width=0.32\linewidth]{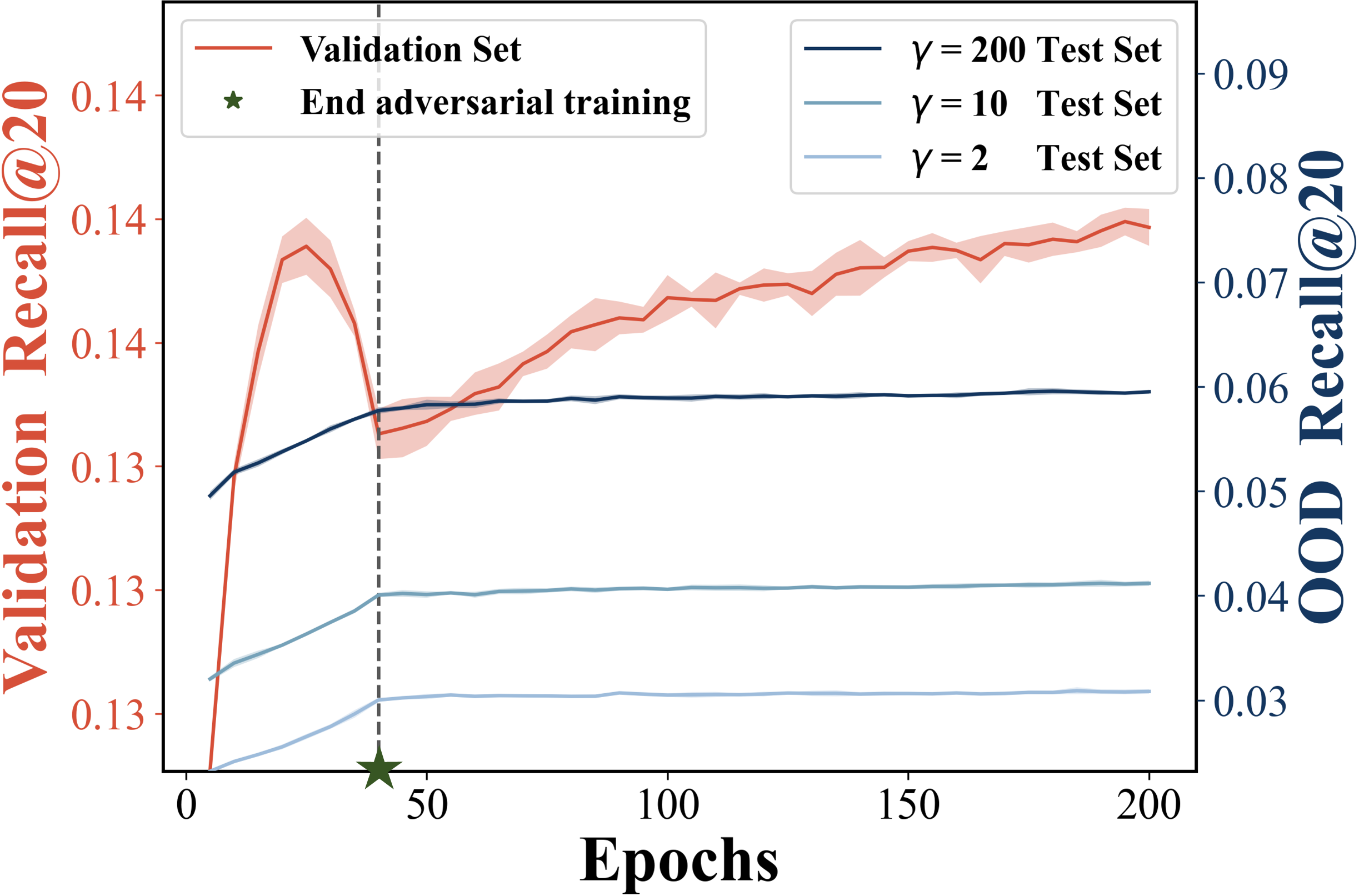}}
        \subcaptionbox{End at late-stage\label{fig:extreme_eta}}{
	    \vspace{-6pt}
		\includegraphics[width=0.32\linewidth]{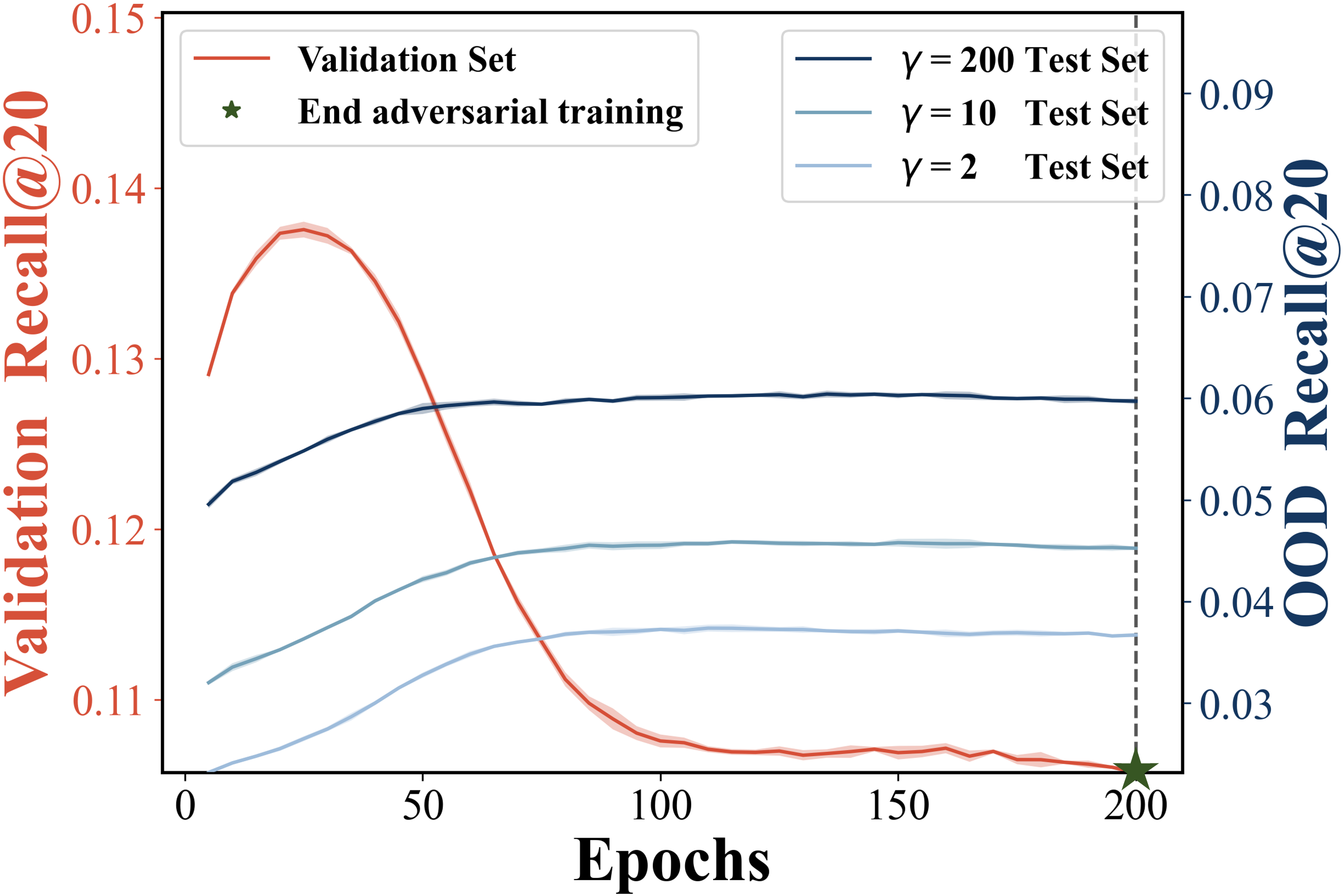}}

	\caption{The performance on the in-distribution validation set and out-of-distribution test sets of Tencent. The total number of epochs of adversarial training is gradually increasing from (a) to (c), and the green star denotes the timepoint when we terminate adversarial training.} 
	\label{fig:eta_effect}
	\vspace{-15pt}
\end{figure*}


The critical hyperparameter, the number of adversarial training epochs, controls the degree of deviation from uniform negative sampling.
As illustrated in Figure \ref{fig:small_eta} through \ref{fig:large_eta}, improved out-of-distribution performance correlates with an increased adversarial learning period.
On the other hand, Figures \ref{fig:large_eta} through \ref{fig:extreme_eta} show that overly prolonged adversarial training can also degrade performance on in-distribution validation sets.
Therefore, it is crucial to devise a proper adversarial training period tailored to specific datasets, which aligns with the limitation of AdvInfoNCE. 
Consistent observations can be found on KuaiRec in Appendix \ref{sec:eta_effect}.
\section{Conclusion}
Despite the empirical successes, the profound understanding of contrastive loss in collaborative filtering (CF) remains limited.
In this work, we devised a principled AdvInfoNCE loss specially tailored for CF methods, utilizing a fine-grained hardness-aware ranking criterion to enhance the recommender's generalization ability.
Grounded by theoretical proof, AdvInfoNCE empowers top-$K$ recommendations via informative negative sampling in an adversarial manner.
We believe that our AdvInfoNCE, as a variant of InfoNCE, provides a valuable baseline for future contrastive learning-based CF studies and is an important stepping stone toward generalized recommenders.

\begin{ack}
This research is supported by the NExT Research Center, the National Natural Science Foundation of China (9227010114), and the University Synergy Innovation Program of Anhui Province (GXXT-2022-040).
\end{ack}

\bibliographystyle{unsrtnat}
\bibliography{neurips_2023}

\appendix
\clearpage
\section{Related Work} \label{sec:related_work}
We remind important related works to understand how our AdvInfoNCE stands and its role in rich literature.
Our work is related to the literature on contrastive learning-based collaborative filtering (CL-based CF) methods, and theoretical understanding of contrastive loss in collaborative filtering.

\subsection{Contrastive Learning-based Collaborative Filtering}
The latest \textbf{CL-based CF methods} can roughly fall into two research lines.
The first one, which we term the ``augmentation-based'' approach, leverages user-item bipartite graph augmentations along with augmented views as positive signals.
The second category, referred to as ``loss-based'' approaches, mainly focuses on the modification of contrastive loss. 
In loss-based CF models, interacted items serve as positive instances.
\begin{itemize}[leftmargin=*]
    \item \textbf{Augmentation-based \cite{SGL, CLRec, AutoCF, NCL, SimGCL, XSimGCL, MSCL, DCL, SLRec, HCCF}.} 
    The prevailing augmentation-based paradigm in CL-based CF methods is to employ user-item bipartite graph augmentations to generate contrasting views. These contrasting views are then treated as positive instances in the application of contrastive loss, such as InfoNCE loss, to further enhance collaborative filtering signals.
    Recent studies have extensively explored methods for generating contrastive views. 
    Several studies like SGL \cite{SGL} and DCL \cite{DCL} elaborate on data-heuristic augmentation operators such as a random node or edge dropout and random walk.
    NCL \cite{NCL} takes a different approach and incorporates potential neighbors from both the graph structure and semantic space into contrastive views.
    XSimGCL \cite{XSimGCL} takes it a step further and discards ineffective graph augmentations, choosing instead to employ a simple noise-based embedding augmentation.
    In pursuit of high-quality augmented supervision signals instead of handcrafted strategies, AutoCF \cite{AutoCF} designs a learnable masking function to automatically identify important centric nodes for data augmentation.
    \item \textbf{Loss-based \cite{CCL, BC-loss, Adap-tau, HDCCF, FNE}.} 
    Recent research, such as the experiments presented in SimGCL \cite{SimGCL} and XSimGCL \cite{XSimGCL}, has empirically shown that contrastive loss can be instrumental in enhancing the performance of CF methods, often playing a more significant role than heuristic-based graph augmentation. 
    Despite these findings, there remains a gap in the exploration of loss-based CF methods, an area ripe for further investigation. 
    BC loss \cite{BC-loss} incorporates bias-aware margins into the contrastive loss, enabling the learning of high-quality head and tail representations with robust discrimination and generalization abilities.
    Adap-$\tau$ \cite{Adap-tau} proposes an adaptive fine-grained strategy for selecting the personalized temperature $\tau$ for each user within the contrastive loss.
    HDCCF \cite{HDCCF} devises a new contrastive loss function extending the advantage of negative mining from user-item to neighbored users and items.
\end{itemize}

\subsection{Theoretical Understanding of Contrastive Loss in CF}
Despite the remarkable success of CL-based CF methods, there remains a lack of theoretical understanding, particularly regarding the superior generalization ability of contrastive loss.
In a study conducted by \cite{Sampled_Softmax}, three model-agnostic advantages of contrastive loss are theoretically revealed, including mitigating popularity bias, mining hard negative samples, and maximizing the ranking metric.
CLRec \cite{CLRec} sheds light on contrastive loss from a bias-reduction perspective by revealing its connection with inverse propensity weighting techniques.
XSimGCL \cite{XSimGCL} suggests that contrastive learning enables the recommender to learn more evenly distributed user and item representations, thereby mitigating the prevalent popularity bias in CF.


\section{In-depth Analysis of AdvInfoNCE}

\subsection{Complete Derivation of AdvInfoNCE} \label{sec:app_derivation_AdvInfoNCE}
\begin{derivation_full}
We first introduce the widely-used LogSumExp operator in machine learning algorithms.
\begin{align}
\label{eq:abb_logsumexp}
\max(x_1, x_2, ..., x_n) \approx \log( \exp(x_1)+\exp(x_2)+...+\exp(x_n))
\end{align}
The fine-grained ranking criterion for a single positive interaction $(u,i)$ is defined as:
\begin{gather} 
\label{eq:abb_ranking_AdvInfoNCE}
    \forall j \in \Set{N}_u, \quad s(u,j)-s(u,i) + \delta_{j} < 0.
\end{gather}
Then we probe into the Eq \eqref{eq:abb_ranking_AdvInfoNCE} and transform it into an optimization problem as follows:
\begin{align}
    \min_\theta\max\{0,\{s(u,j)-s(u,i) + \delta_{j}\}_{j\in\Set{N}_u}\}
\end{align}
We can seamlessly transform this optimization objective into the core component of our AdvInfoNCE:
\begin{align}\label{eq:abb_AdvInfoNCE}
    \underbrace{\max\{0,\{s(u,j)-s(u,i) + \delta_{j}\}_{j\in\Set{N}_u}\}}_{\text{Hardness-aware ranking criterion}} \approx & \log (\exp(0)+\sum_{j=1}^{|\Set{N}_u|}\exp(s(u,j)-s(u,i) + \delta_{j}))  \nonumber \\
    = & \log \{1+\sum_{j=1}^{|\Set{N}_u|}\exp(\delta_{j})\exp(s(u,j)-s(u,i))\} \nonumber \\
    = & \log \{1+ |\Set{N}_u|\sum_{j=1}^{|\Set{N}_u|}\frac{\exp(\delta_{j})}{|\Set{N}_u|}\exp(s(u,j)-s(u,i))\} \nonumber \\
    = & \underbrace{- \log \{
    \frac{\exp(s(u,i))}{\exp(s(u,i))+|\Set{N}_u|\sum_{j=1}^{|\Set{N}_u|}\frac{\exp(\delta_{j})}{|\Set{N}_u|}\exp(s(u,j)) }\}}_{\text{AdvInfoNCE}}  
\end{align}
Drawing inspiration from adversarial training \cite{Adversarial}, we utilize a min-max game that allows for alternating training of the model between predicting hardness and refining the CF model. 
Formally, we formulate the AdvInfoNCE learning framework as the following optimization problem:
\begin{align}
    \min_{\theta} \Lapl_{\text{AdvInfoNCE}} = \min_\theta \max_{\Delta\in \Space{C}(\eta)} -\sum_{(u,i)\in\Set{O}^{+}} \log\frac{\exp{(s(u,i))}}{\exp{(s(u,i))}+|\Set{N}_u|\sum_{j=1}^{|\Set{N}_u|}\frac{\exp(\delta_{j}^{(u,i)})}{|\Set{N}_u|}\exp(s(u,j))}
\end{align}
where $\frac{\exp(\delta_{j}^{(u,i)})}{|\Set{N}_u|} \in \Space{C}(\eta, (u,i))=(\frac{1}{|\Set{N}_u|}-\epsilon, \frac{1}{|\Set{N}_u|}+\epsilon)$, and $\epsilon$ is a hyperparameter that regulates the upper-bound deviation of hardness.
In practice, $\epsilon$ is regulated by the number of adversarial training epochs under a fixed learning rate (refer to Algorithm \ref{alg:AdvInfoNCE}).

If we further define $\frac{\exp(\delta_{j}^{(u,i)})}{|\Set{N}_u|}$ as $p(j|(u,i))$, which signifies the likelihood of selecting item $j$ as a negative sample for the user-item pair $(u,i)$, we can rewrite the AdvInfoNCE loss as:
\begin{align} \label{eq:abb_advinfonce_full}
    \min_{\theta} \Lapl_{\text{AdvInfoNCE}} = \min_\theta \max_{p(\cdot|\cdot)} -\sum_{(u,i)\in\Set{O}^{+}} \log\frac{\exp{(s(u,i))}}{\exp{(s(u,i))}+|\Set{N}_u|\sum_{j=1}^{|\Set{N}_u|}p(j|(u,i))\exp(s(u,j))}
\end{align}
\end{derivation_full}

\subsection{Proof of Theorem} 
\label{sec:app_dro_theorem}

\DROtheorem*

\begin{proof}
We denote the relative hardness of negative item $j$ with respect to observed interaction $(u, i)$ as $\delta_j^{(u,i)}$ and redefine it as $\log(|\Set{N}_u|\cdot p(j|(u,i)))$.
In this definition, $|\Set{N}u|$ is the number of negative samples for each user $u$, and $p(j|(u,i))$ is the probability of selecting item $j$ as a negative sample for a given user-item pair $(u,i)$.

With the constraint that $\sum_{j=1}^{|\Set{N}_u|} p(j|(u,i)) = 1$, we can recalculate the average hardness as follows:
\begin{align}
  \frac{1}{|\Set{N}_u|} \sum_{j=1}^{|\Set{N}_u|} \delta_j^{(u,i)} & =\frac{1}{|\Set{N}_u|} \sum_{j=1}^{|\Set{N}_u|}\log(|\Set{N}_u|\cdot p(j|(u,i))) \nonumber\\
  & = - \sum_{j=1}^{|\Set{N}_u|} \frac{1}{|\Set{N}_u|} \log(\frac{1}{|\Set{N}_u|}\cdot\frac{1}{p(j|(u,i))}) \nonumber \\
  & = - \Set{D}_{KL}(P_0||P)
\end{align}
In this formulation, $P_0$ represents the distribution of uniformly drawn negative samples, where $p_0(j|(u,i)) = \frac{1}{|\Set{N}_u|}$. 
Meanwhile, $P$ denotes the distribution of negative sampling, represented as $p(j|(u,i))$.

Since $p(j|(u,i)) \in \Space{C}(\eta, (u,i))=(\frac{1}{|\Set{N}_u|}-\epsilon, \frac{1}{|\Set{N}_u|}+\epsilon)$, we 
further define $\eta = -log(1-\frac{\epsilon^2}{|\Set{N}_u|})$.
Thus, the feasible zone presented above is equivalent to a relaxed constraint $\Set{D}_{KL}(P_0||P) \leq \eta$.

The AdvInfoNCE loss for all observations can be rewritten in a different form by reorganizing and simplifying the Eq \eqref{eq:abb_advinfonce_full}:
\begin{align}
   \min_{\theta} \Lapl_{\text{AdvInfoNCE}}  & =  \min_\theta \sum_{(u,i)\in\Set{O}^{+}} \max_{p(j|(u,i))\in \Space{P}}  \log\{1+|\Set{N}_u|\sum_{j=1}^{|\Set{N}_u|}p(j|(u,i))\exp(s(u,j)-s(u,i))\} \nonumber \\
   & \Longleftrightarrow \min_\theta \sum_{(u,i)\in\Set{O}^{+}} \max_{p(j|(u,i))\in \Space{P}} \sum_{j=1}^{|\Set{N}_u|}p(j|(u,i))\exp(s(u,j)-s(u,i)) \nonumber \\
   & \Longleftrightarrow \min_\theta \sum_{(u,i)\in\Set{O}^{+}} \sup_{p(j|(u,i))\in \Space{P}} \mathbb{E}_P [\exp(s(u,j)-s(u,i))]
\end{align}
The equation presented above exemplifies a widely encountered formulation of the Distributionally Robust Optimization (DRO) problem \cite{DRO_KL} where the ambiguity set of the probability distribution is defined by the Kullback-Leibler (KL) divergence $\Set{D}_{KL}(P_0||P) \leq \eta$.
\end{proof}

\subsection{Gradients Analysis}
\label{sec:app_negative_ming}
In this section, we delve into the crucial role of hardness $\delta_j$ in controlling the penalty strength on hard negative samples. The analysis begins with the main part of AdvInfoNCE for a single positive interaction $(u,i)$, as defined in Eq \eqref{eq:abb_AdvInfoNCE}, primarily due to its simplicity.
For the sake of notation simplicity, let us denote it as:
\begin{align}
    \Lapl_{\text{Adv}}\czb{(u,i)} = - \log \{
    \frac{\exp(s(u,i))}{\exp(s(u,i))+\sum_{j=1}^{|\Set{N}_u|}\exp(\delta_{j})\exp(s(u,j)) }\}
\end{align}

Then the gradient with respect to the positive representations $\phi_\theta(i)$ of item $i$ is formulated as:
\begin{align}
    - \nabla_i \Lapl_{\text{Adv}}\czb{(u,i)} & = \frac{\partial \Lapl_{\text{Adv}}\czb{(u,i)}}{\partial s(u,i)} \cdot \frac{\partial s(u,i)}{\partial \phi_\theta(i)} \nonumber \\
    & = (1-\frac{\exp(s(u,i))}{\exp(s(u,i))+\sum_{j=1}^{|\Set{N}_u|}\exp(\delta_{j})\exp(s(u,j))})\cdot \frac{\phi_\theta(i)}{\tau} 
\end{align}

The gradients with respect to the negative representations $\phi_\theta(j)$ of item $j$ is given by:
\begin{align}\label{eq:abb_gradients}
    - \nabla_j \Lapl_{\text{Adv}}(u,i) & = \frac{\partial \Lapl_{\text{Adv}}(u,i)}{\partial s(u,j)} \cdot \frac{\partial s(u,j)}{\partial \phi_\theta(j)} \nonumber \\
    & = \frac{\exp(\delta_{j})\exp(s(u,j))}{\exp(s(u,i))+\sum_{j=1}^{|\Set{N}_u|}\exp(\delta_{j})\exp(s(u,j))}\cdot \frac{\phi_\theta(j)}{\tau} \nonumber \\
    & = \exp(\delta_{j})\{1-\frac{\exp(s(u,i))}{\exp(s(u,i))+\sum_{j=1}^{|\Set{N}_u|}\exp(\delta_{j})\exp(s(u,j))}\} \frac{\exp(s(u,j))}{\sum_{j=1}^{|\Set{N}_u|}\exp(\delta_{j})\exp(s(u,j))}\frac{\phi_\theta(j)}{\tau} 
\end{align}

Clearly, for a given user $u$, the gradient with respect to the positive item $i$ equals the sum of gradients of all negative items, in accordance with the findings in \cite{tau}.
The hardness $\exp(\delta_j)$ dictates the importance of negative gradients.
Specifically, the gradients relating to the negative item $j$ in Eq \eqref{eq:abb_gradients} correlate proportionally to the hardness term $\exp(\delta_j)$, which shows that the AdvInfoNCE loss function is hardness-aware.

\subsection{Align Top-K evaluation metric}
\label{sec:app_topK}
Discounted Cumulative Gain (DCG) is a commonly used ranking metric in top-$K$ recommendation tasks. 
In DCG, the relevance of an item's contribution to the utility decreases logarithmically in relation to its position in the ranked list. 
This mimics the behavior of a user who is less likely to scrutinize items that are positioned lower in the ranking. 
Formally, DCG over rank $\pi_s(u, i)$ is defined as follows:
\begin{align}
    DCG(\pi_s(u,\mathcal{I}), \Mat{y}) = \sum_{i=1}^{|\mathcal{I}|}\frac{2^{y_i}-1}{\log_2(1+\pi_s(u, i))}
\end{align}
Where $\pi_s(u,\mathcal{I})$ is a ranked list over $\mathcal{I}$, as determined by the similarity function $s$ for user $u$, and $\Mat{y}$ is a label vector that indicates whether an interaction has occurred previously or not. 
Then $\pi_s(u,i)$ is the rank of item $i$.
Building on the research presented in \cite{Sampled_Softmax}, we explore how well AdvInfoNCE aligns with DCG for our purposes.

Under our proposed fine-grained hardness ranking criteria defined in Eq \eqref{eq:abb_ranking_AdvInfoNCE}, the $\pi_s(u,i)$ can be obtained as follows:
\czb{
\begin{align}
    \pi_s(u,i) &= 1+\sum_{j\in \mathcal{I}\backslash\{i\}} \mathbbm{1}(s(u,j)-s(u,i) + \delta_{j} > 0) \nonumber \\
    &\leq 1 + \sum_{j\in \mathcal{I}\backslash\{i\}} \exp(s(u,j)-s(u,i) + \delta_{j} ).
\end{align}
The last inequality is satisfied by $\mathbbm{1}(x > 0) \leq \exp(x)$. 
\begin{align} \label{eq:AdvInfoNCE>DCG}
    -\log[DCG(\pi_s(u,\mathcal{I}), \Mat{y})] &= -\log \bigg[ \sum_{i=1}^{|\mathcal{I}|}\frac{2^{y_i}-1}{\log_2(1+\pi_s(u, i))}\bigg] \nonumber \\
    &\leq -\log \bigg[ \frac{1}{\log_2(1+\pi_s(u, i))}\bigg] \leq -\log \bigg[ \frac{1}{\pi_s(u, i)}\bigg] \nonumber \\
    &\leq -\log ( \frac{1}{1 + \sum_{j\in \mathcal{I}\backslash\{i\}} \exp(s(u,j)-s(u,i) + \delta_{j} ) } ) \nonumber \\
    &= -\log ( \frac{\exp( s(u,i) )}{\exp( s(u,i) ) + \sum_{j\in \mathcal{I}\backslash\{i\}} \exp(s(u,j) + \delta_{j} ) } ) \nonumber \\
    &= \Lapl_{\text{Adv}}(u,i) \leq \Lapl_{\text{AdvInfoNCE}}
\end{align}

Suppose that there are $K$ items in $\mathcal{I}$ that are interacted with $u$, let them to be $\{1, 2, \cdots, K\} $ ithout loss of generality. Then 
\begin{align} \label{eq:DCG_with_K}
    DCG(\pi_s(u,\mathcal{I}), \Mat{y}) \leq \sum_{i=1}^{K}\frac{1}{\log_2(1+i)}, 
\end{align}
the equality holds if and only if the interactions $\{(u,i):i = 1, 2, \cdots, K\}$ are ranked top-K. For a given $u$, 
\begin{align}
    \sum_{i=1}^{K} \Lapl_{\text{Adv}}(u,i) &= \sum_{i=1}^{K} -\log ( \frac{\exp( s(u,i) )}{\exp( s(u,i) ) + \sum_{j\in \mathcal{I}\backslash\{i\}} \exp(s(u,j) + \delta_{j} ) } ) \nonumber \\
    &= \sum_{i=1}^{K} \log ( 1 + \sum_{j\in \mathcal{I}\backslash\{i\}} \exp(s(u,j) -\exp(s(u,i)) + \delta_{j} ) ) \nonumber \\
    &\geq \sum_{i=1}^{K} \log ( 1 + \sum_{j\in \mathcal{I}\backslash\{i\}} e^{-1} ) = \sum_{i=1}^{K} \log ( 1 + (|\mathcal{I}|-1) e^{-1} ) \label{eq:e_inverse}\\
    &\geq \sum_{i=1}^{K} \frac{1}{\log_2 ( 1 + i )} \geq DCG(\pi_s(u,\mathcal{I}), \Mat{y}). \label{eq:ineq_i}
\end{align}
The inequality in Eq \eqref{eq:e_inverse} holds under the common usage of $s(u, i) \in [0, 1]$, the first inequality in Eq \eqref{eq:ineq_i} holds when $|\mathcal{I}| \geq 6$, while the second inequality in Eq \eqref{eq:ineq_i} holds by Eq \eqref{eq:DCG_with_K}. 

Therefore, by Eq \eqref{eq:AdvInfoNCE>DCG} and Eq \eqref{eq:ineq_i}, 
\begin{align}
    \Lapl_{\text{AdvInfoNCE}} \geq DCG(\pi_s(u,\mathcal{I}), \Mat{y}) \geq \exp(-\Lapl_{\text{AdvInfoNCE}}). 
\end{align}
Consequently, minimizing $ \Lapl_{\text{AdvInfoNCE}} $ is equivalent to minimizing $DCG(\pi_s(u,\mathcal{I}), \Mat{y})$. 
}

\section{Experiments} \label{sec:abb_experiments}

\subsection{Experimental Settings} \label{sec:exp_settings}

\textbf{Datasets}

\begin{table}[b]
\centering
\caption{Dataset statistics.}
\label{tab:dataset_statistics}
\resizebox{0.55\columnwidth}{!}{
\begin{tabular}{lrrrr}
\hline
 & KuaiRec & Yahoo!R3 & Coat & Tencent \\ \hline\hline
\#Users & 7,176 & 14,382 & 290 & 95,709 \\
\#Items & 10,728 & 1,000 & 295 & 41,602 \\
\#Interactions & 1,304,453 & 129,748 & 2,776 & 2,937,228\\
Density & 0.0169 & 0.0090 & 0.0324 & 0.0007 \\ \hline
\end{tabular}}
\end{table}

\begin{itemize}[leftmargin=*]
   \item \textbf{KuaiRec} \cite{kuairec} is a real-world dataset sourced from the recommendation logs of KuaiShou, a platform for sharing short videos. The unbiased testing data consist of dense ratings from 1411 users for 3327 items, with the training data being relatively sparse. We categorize items that have a viewing duration exceeding twice the length of the short video as positive interactions.

   \item \textbf{Yahoo!R3} \cite{yahoo} is a dataset that encompasses ratings for songs. 
   The training set is comprised of 311,704 user-selected ratings ranging from 1 to 5. 
   The test set includes ratings for ten songs randomly exposed to each user. 
   Interactions with items receiving a rating of 4 or higher are considered positive in our experiments.
   
   \item \textbf{Coat} \cite{coat} records online shopping interactions of customers purchasing coats. 
   The training set, characterized as a biased dataset, comprises ratings provided by users for 24 items they have chosen. 
   The test set, on the other hand, contains ratings for 16 coats that were randomly exposed to each user. 
   Ratings in Coat follow a 5-point scale, and interactions involving items with a rating of 4 or above are classified as positive instances in our experiments.

    \item \textbf{Tencent} \cite{InvCF} is collected from the Tencent's short-video platform. 
    We sort the items according to their popularity in descending order and divide them into 50 groups. 
    Each group, defined by its popularity rank, is assigned a certain number of interactions, denoted by $N_i$, for inclusion in the test set. 
    The quantity $N_i$ is calculated based on $N_0 \cdot \gamma^{-\frac{i-1}{49}}$, where $N_0$ is the maximum number of interactions across all test groups and $\gamma$ denotes the extent of the long-tail distribution. 
    A lower value of gamma indicates a stronger deviation from the original distribution, thus yielding a more evenly distributed test set.
    To ensure that the validation set mirrored the long-tail distribution of the training set and no side information of the test set is leaked, the remaining interactions are randomly divided into training and validation sets at a ratio of 60:10.
\end{itemize}

\textbf{Baselines.}
In this study, we conduct a comprehensive evaluation of AdvInfoNCE using two widely adopted collaborative filtering backbones, MF \cite{MF} and LightGCN \cite{LightGCN}. 
We thoroughly compare AdvInfoNCE with two categories of the latest CL-based CF methods: augmentation-based baselines (SGL \cite{SGL}, NCL \cite{NCL}, XSimGCL \cite{XSimGCL}) and loss-based baselines (CCL \cite{CCL}, BC Loss \cite{BC-loss}, Adap-$\tau$ \cite{Adap-tau}).

\begin{itemize}[leftmargin=*]
   \item \textbf{SGL} \cite{SGL} leverages data-heuristic graph augmentation techniques to generate augmented views. 
   It then employs contrastive learning on the augmented views and the original embeddings. 

   \item \textbf{NCL} \cite{NCL} implements contrastive learning on two types of neighbors: structural neighbors and semantic neighbors. 
   Structural neighbors are represented by the embeddings derived from even-numbered layers in a Graph Neural Network (GNN).
   Semantic neighbors comprise nodes with similar features or preferences, clustered through the Expectation-Maximization (EM) algorithm.

   \item \textbf{XSimGCL} \cite{XSimGCL} directly infuses noise into graph embeddings from the mid-layer of LightGCN to generate augmented views. 
   This method arises from experimental observations indicating that CF models are relatively insensitive to graph augmentation. 
   Instead, the key determinant of their performance lies in the application of contrastive loss.

   \item \textbf{CCL} \cite{CCL} proposes a variant of contrastive loss based on cosine similarity. 
   Specifically, it implements a strategy for filtering out negative samples lacking substantial information by employing a margin, denoted as $m$.

   \item \textbf{BC Loss} \cite{BC-loss} integrates a bias-aware margin into the contrastive loss to alleviate popularity bias.
   Specifically, the bias-aware margin is learned via a specialized popularity branch, which only utilizes the statistical popularity of users and items to train an additional CF model.

   \item \textbf{Adap-$\tau$} \cite{Adap-tau} proposes to automatically search the temperature of InfoNCE. 
   Moreover, a fine-grained temperature is assigned for each user according to their previous loss.
\end{itemize}

\textbf{Evaluation Metrics.}
We apply the all-ranking strategy, in which all items, with the exception of the positive ones present in the training set, are ranked by the collaborative filtering model for each user. 
An exception is KuaiRec, where the unbiased test set clusters in a small matrix \cite{kuairec}. 
This unique structure leads to a failure of the all-ranking strategy as the test set is not randomly selected from the whole user-item matrix. 
Consequently, for KuaiRec, we rank only the 3327 fully exposed items during the testing phase.

\subsection{Performance over the MF Backbone} \label{sec:mf_experiment}

\begin{table*}[t]
    \centering
    \caption{The performance comparison on unbiased datasets over the MF backbone. The improvement achieved by AdvInfoNCE is significant ($p$-value $<<$ 0.05).}
    \label{tab:unbiased_mf}
    \vspace{-5pt}
    \resizebox{0.78\linewidth}{!}{
    \begin{tabular}{l|ll|ll}
    \toprule
    & \multicolumn{2}{c|}{Yahoo!R3}  & \multicolumn{2}{c}{Coat} \\
    \multicolumn{1}{c|}{}  & \multicolumn{1}{c}{Recall} & \multicolumn{1}{c|}{NDCG} & \multicolumn{1}{c}{Recall} & \multicolumn{1}{c}{NDCG}  \\\midrule

BPR (\citeauthor{BPR}, \citeyear{BPR}) & $0.1189$ & $0.0546$ & $0.2782$ & $0.1748$  \\
InfoNCE (\citeauthor{InfoNCE}, \citeyear{InfoNCE}) & $0.1478$ & $0.0694$ & $0.2683$ & $0.1961$\\\midrule

CCL (\citeauthor{CCL}, \citeyear{CCL})  & $ 0.1458^{\color{-}-1.35\%}$ & $ 0.0689^{\color{-}-0.72\%}$ & $ 0.2682^{\color{-}-0.04\%}$ & $ 0.1712^{\color{-}-12.70\%}$ \\
BC Loss (\citeauthor{BC-loss}, \citeyear{BC-loss})  & $ 0.1492^{\color{+}+0.95\%}$ & $ \underline{0.0698}^{\color{+}+0.58\%}$ & $ 0.2698^{\color{+}+0.56\%}$ & $ 0.1959^{\color{-}-0.10\%}$ \\
Adap-$\tau$ (\citeauthor{Adap-tau}, \citeyear{Adap-tau})   & $ \underline{0.1512}^{\color{+}+2.30\%}$ & $ 0.0694^{\color{+}+0.43\%}$ & $ \underline{0.2712}^{\color{+}+1.08\%}$ & $ \underline{0.1986}^{\color{+}+1.27\%}$ \\
\textbf{AdvInfoNCE} & $ \textbf{0.1523*}^{\color{+}+3.04\%}$ & $ \textbf{0.0710*}^{\color{+}+2.31\%}$ & $ \textbf{0.2905*}^{\color{+}+8.27\%}$ & $ \textbf{0.1999*}^{\color{+}+1.94\%}$ \\\bottomrule

    
    \end{tabular}}
  \end{table*}

\begin{table*}[t]
    \centering
    \caption{The performance comparison on the Tencent dataset over the MF backbone. The improvement achieved by AdvInfoNCE is significant ($p$-value $<<$ 0.05).}
    \label{tab:synthetic_mf}
    \resizebox{1\linewidth}{!}{
    \begin{tabular}{l|ccc|ccc|ccc|ccc}
    \toprule
     & \multicolumn{3}{c|}{$\gamma$ = 200} & \multicolumn{3}{c|}{$\gamma$ = 10}  & \multicolumn{3}{c|}{$\gamma$ = 2} & \multicolumn{1}{c}{Validation}\\
    \multicolumn{1}{c|}{} & HR & Recall & NDCG & HR & Recall & NDCG & HR & Recall & NDCG & NDCG\\\midrule

BPR (\citeauthor{BPR}, \citeyear{BPR}) & $0.0835$ & $0.0299$ & $0.0164$ & $0.0516$& $0.0190$ & $0.0102$ &$0.0357$ & $0.0141$ & $0.008$ & $0.0533$ \\
InfoNCE (\citeauthor{InfoNCE}, \citeyear{InfoNCE}) & $0.1476$ & $0.0538$ & $0.0318$ & $0.0920$ & $0.0334$ & $0.0194$ & $0.0627$ & $0.0233$ & $0.0141$ & $0.0856$ \\\midrule

CCL (\citeauthor{CCL}, \citeyear{CCL}) & $0.1395$ & $ 0.0523$ & $ 0.0317$ & $0.0930$ & $ 0.0353 $ & $ 0.0221$ & $0.0683$ & $ 0.0266$ & $ 0.0170$ & $ 0.0782$ \\
BC Loss (\citeauthor{BC-loss}, \citeyear{BC-loss}) & $\underline{0.1546}$   & $ \underline{0.0575}$ & $ \underline{0.0349}$ & $\underline{0.1011}$ & $ \underline{0.0378}$ & $ \underline{0.0228}$ & $\underline{0.0737}$ & $ \underline{0.0280}$ & $ \underline{0.0178}$  & $ 0.0864$ \\

Adap-$\tau$ (\citeauthor{Adap-tau}, \citeyear{Adap-tau})  & $0.1398$ & $0.0512$ & $0.0302$ & $0.0876$ & $ 0.0316$ & $ 0.0182$ & $ 0.0591$ & $ 0.0221$ & $ 0.0134$ & $ 0.0844$\\
\textbf{AdvInfoNCE}  & $\textbf{0.1606*}$ & $ \textbf{0.0595*}$ & $ \textbf{0.0355*}$ &$\textbf{0.1111*}$ & $ \textbf{0.0412*}$ & $ \textbf{0.0249*}$ &$\textbf{0.0813*}$& $ \textbf{0.0308*}$ & $ \textbf{0.0189*}$  & $ 0.0860 $\\\midrule

Imp.\% over the strongest baseline & $3.87\%$  & $3.52\%$ & $1.86\%$ &$9.86\%$ & $8.86\%$ & $9.38\%$ &$10.33\%$ & $9.87\%$ & $6.28\%$ & $-$\\\midrule

Imp.\% over InfoNCE   & $8.84\%$ & $10.51\%$ & $11.77\%$ & $20.81\%$ & $23.52\%$ & $28.41\%$ & $29.64\%$ & $31.97\%$ & $33.69\%$ & $-$ \\\bottomrule

\end{tabular}}
     \vspace{-10pt}
\end{table*} 

Given that the implementation of augmentation-based methods is tied to the LightGCN architecture, we compare AdvInfoNCE using the Matrix Factorization (MF) backbone against only loss-based methods.
As shown in Table \ref{tab:unbiased_mf} and \ref{tab:synthetic_mf}, AdvInfoNCE consistently surpasses all collaborative filtering baselines with modified contrastive losses.
Moreover, similar to the trend observed with the LightGCN backbone, AdvInfoNCE excels on test sets exhibiting higher distribution shifts, while still preserving remarkable performance on the in-distribution validation set. 
The superior performance of AdvInfoNCE across both the LightGCN and MF backbones emphasizes its model-agnostic characteristic.
We advocate for considering AdvInfoNCE as a standard loss in recommender systems.

\subsection{Performance over Extensive Backbones} \label{sec:backbones_experiment}

To validate the generalization ability of AdvInfoNCE, we conducted experiments on additional backbones, including UltraGCN \cite{UltraGCN} and an adapted version of VGAE \cite{vgae}. 
The results in Table \ref{tab:backbones_experiments} indicate that AdvInfoNCE performs excellently across various backbones, which showcases the generalization ability of AdvInfoNCE.

\begin{table*}[t]
    \centering
    \caption{The performance comparison on the Tencent dataset over extensive backbones. The improvement achieved by AdvInfoNCE is significant ($p$-value $<<$ 0.05).}
    \label{tab:backbones_experiments}
    \resizebox{1\linewidth}{!}{
    \begin{tabular}{l|ccc|ccc|ccc|ccc}
    \toprule
     & \multicolumn{3}{c|}{$\gamma$ = 200} & \multicolumn{3}{c|}{$\gamma$ = 10}  & \multicolumn{3}{c|}{$\gamma$ = 2} & \multicolumn{1}{c}{Validation}\\
    \multicolumn{1}{c|}{} & HR & Recall & NDCG & HR & Recall & NDCG & HR & Recall & NDCG & NDCG\\\midrule

UltraGCN (\citeauthor{UltraGCN}, \citeyear{UltraGCN}) & $0.0930$ & $0.0343$ & $0.0190$ & $0.0567$ & $0.0215$ & $0.0119$ & $0.0400$ & $0.0157$ & $0.0095$ & $0.0682$ \\

UltraGCN + InfoNCE & $0.1436$ & $ 0.0519$ & $ 0.0303$ & $0.0896$ & $ 0.0324 $ & $ 0.0189$ & $0.0617$ & $ 0.0227$ & $ 0.0135$ & $ 0.0842$ \\
UltraGCN + AdvInfoNCE & $\underline{0.1538}$   & $ \underline{0.0569}$ & $ \underline{0.0338}$ & $\underline{0.1025}$ & $ \underline{0.0380}$ & $ \underline{0.0227}$ & $\underline{0.0726}$ & $ \underline{0.0276}$ & $ \underline{0.0168}$  & $ 0.0883$ \\\midrule

VGAE (\citeauthor{vgae}, \citeyear{vgae}) + InfoNCE   & $0.1482$ & $0.0536$ & $0.0315$ & $0.0923$ & $ 0.0338$ & $ 0.0202$ & $ 0.0640$ & $ 0.0237$ & $ 0.0141$ & $ 0.0823$\\
VGAE + AdvInfoNCE  & $\textbf{0.1588*}$ & $ \textbf{0.0589*}$ & $ \textbf{0.0353*}$ &$\textbf{0.1069*}$ & $ \textbf{0.0395*}$ & $ \textbf{0.0239*}$ &$\textbf{0.0778*}$& $ \textbf{0.0296*}$ & $ \textbf{0.0182*}$  & $ 0.0871 $\\\bottomrule

\end{tabular}}
     \vspace{-10pt}
\end{table*} 

\subsection{Performance Comparison with Extensive Baselines} \label{sec:app_baseline_experiment}
We compare AdvInfoNCE on the LightGCN backbone with extensive baselines on Tencent. The results in Table \ref{tab:app_baseline_experiment} show that AdvInfoNCE also outperforms almost all the latest debiasing \cite{InvCF, S-DRO} and hard negative mining algorithms \cite{XIR}.

\begin{table*}[t]
    \centering
    \caption{The performance comparison on the Tencent dataset with extensive baselines. The improvement achieved by AdvInfoNCE is significant ($p$-value $<<$ 0.05).}
    \label{tab:app_baseline_experiment}
    \resizebox{1\linewidth}{!}{
    \begin{tabular}{l|ccc|ccc|ccc|ccc}
    \toprule
     & \multicolumn{3}{c|}{$\gamma$ = 200} & \multicolumn{3}{c|}{$\gamma$ = 10}  & \multicolumn{3}{c|}{$\gamma$ = 2} & \multicolumn{1}{c}{Validation}\\
    \multicolumn{1}{c|}{} & HR & Recall & NDCG & HR & Recall & NDCG & HR & Recall & NDCG & NDCG\\\midrule

XIR (\citeauthor{XIR}, \citeyear{XIR}) & $0.1463$ & $0.0538$ & $0.0326$ & $0.0936$ & $0.0341$ & $\underline{0.0211}$ & $0.0642$ & $0.0245$ & $\underline{0.0154}$ & $0.0883$ \\

sDRO (\citeauthor{S-DRO}, \citeyear{S-DRO}) & $0.1455$ & $ 0.0516$ & $ 0.0286$ & $0.0857$ & $ 0.0304 $ & $ 0.0166$ & $0.0552$ & $ 0.0205$ & $ 0.0110$ & $ 0.0872$ \\
InvCF (\citeauthor{InvCF}, \citeyear{InvCF}) & $\textbf{0.1651}$   & $ \textbf{0.0605}$ & $ \underline{0.0331}$ & $\underline{0.1061}$ & $ \underline{0.0386}$ & $ 0.0204$ & $\underline{0.0722}$ & $ \underline{0.0272}$ & $ 0.0149$  & $ 0.0912$ \\\midrule
AdvInfoNCE  & $\underline{0.1600}$ & $ \underline{0.0594}$ & $ \textbf{0.0356*}$ &$\textbf{0.1087*}$ & $ \textbf{0.0403*}$ & $ \textbf{0.0243*}$ &$\textbf{0.0774*}$& $ \textbf{0.0295*}$ & $ \textbf{0.0180*}$  & $ 0.0879 $\\\bottomrule

\end{tabular}}
     \vspace{-10pt}
\end{table*} 

\subsection{Training Cost} 

\begin{table}[t]
    \centering
    \caption{Time Complexity}
    \label{tab:complexity_O}
    \resizebox{\linewidth}{!}{
    \begin{tabular}{l|cccccc}
    \toprule
     & +BPR & +InfoNCE & +CCL & +BC Loss & +Adap $\tau$ & +AdvInfoNCE \\ \midrule
    Backbone &  $O(N_{b}Bd)$ & $O(N_{b}B(N+1)d)$ & $O(N_{b}B(N+1)d)$ & $O(N_{b}B(N+1)d)$ & $O(N_{b}B(N+1)d+(M+n)d)$ & $O(N_{b}B(N+1)d)$ \\
    \bottomrule
    \end{tabular}}
\end{table}

\begin{table}[t]
    \centering
    \caption{Training cost on Tencent (seconds per epoch/in total). }
    \label{tab:complexity}
    \resizebox{\linewidth}{!}{
    \begin{tabular}{l|cccccccc}
    \toprule
     & +InfoNCE & +SGL & +NCL & +XSimGCL & +CCL & +BC loss & +Adap-$\tau$ & +AdvInfoNCE\\ \midrule
    MF & 16.8 / 7,123 & $-$ & $-$ & $-$ & 17.2 / 4,111 &  19.1 / 6,751 & 22.3 / 7,694 & 21.6 / 11,534\\
    LightGCN & 41.2 / 21,177 & 82.5 / 4,868 & 54.9 / 5,161 & 42.1 / 842 & 42.1 / 11,114 &  43.5 / 23,664 & 55.6 / 17,236 & 44.6 / 21,586\\
    \bottomrule
    \end{tabular}}
\end{table}

Let n be the number of items, d be the embedding size, N be the number of negative sampling, M = $|\mathcal{O}^{+}|$ be the number of observed interactions, B be the batch size and $N_{b}$ be the number of mini-batches within one batch. In AdvInfoNCE, the similarity calculation for one positive item with N negative items costs $O((N+1)d)$, and the hardness calculation costs $O(Nd)$. The total training costs of one epoch without backward propagation are summarized in Table \ref{tab:complexity_O}. The training cost of AdvInfoNCE is a little higher than BPR loss, sharing the same complexity with InfoNCE.\\
In Table \ref{tab:complexity}, we present both the per-epoch and total time costs for each baseline model on the Tencent dataset. 
As evidenced, augmentation-based contrastive learning (CL) baselines significantly cut down the overall training time, while loss-based CL baselines exhibit a complexity similar to that of InfoNCE. 
Surprisingly, compared to InfoNCE, AdvInfoNCE introduces only a marginal increase in computational complexity during the training phase.


\section{Discussion about AdvInfoNCE} \label{sec:abb_studies}

\subsection{Algorithm}

\begin{algorithm}[t]
   \caption{AdvInfoNCE}
   \label{alg:AdvInfoNCE}
\begin{algorithmic}
   \STATE {\bfseries Input:} observed interactions $\Set{O}^{+}$, unobserved interactions $\Set{O}^{-}$, learning rate of adversarial training $lr_{adv}$, maximum adversarial training epochs $E_{adv}$, adversarial training intervals $T_{adv}$, parameters of the CF model $\theta$, parameters of the hardness evaluation models $\theta_{adv}$, weighting parameter $K$
   \STATE {\bfseries Output:} $\theta$
   \STATE {\bfseries Initialize:} Initialize $\theta$ and $\theta_{adv}$, $e \leftarrow 1$, $e_{adv} \leftarrow 1$

   \REPEAT
   \STATE Freeze parameters of the hardness evaluation model $\theta_{adv}$\\
   Randomly sample $N$ negative items from $\Set{I}_u^-$ for each interaction within a batch\\
   Compute $s(u,i)$, $\delta_{j}^{(u,i)}$ with $\theta$ and $\theta_{adv}$, respectively\\
   Compute $\Lapl_{\text{Adv}}(u,i) = - \log\frac{\exp{(s(u,i))}}{\exp{(s(u,i))}+K\sum_{j=1}^{N}\exp(\delta_{j}^{(u,i)})\exp(s(u,j))}$\\
   Update $\theta$ by minimizing $\Lapl_{\text{Adv}}(u,i)$

   \IF{ $e \bmod T_{adv} == 0 \quad \& \quad e_{adv} \leq E_{adv}$}
    \STATE Freeze parameters of the CF model $\theta$\\
   \STATE Update $\theta_{adv}$ by maximizing $\Lapl_{\text{Adv}}(u,i)$\\
   \STATE $e_{adv} \leftarrow e_{adv}+1$
   \ENDIF

   \STATE $e \leftarrow e+1$

   \UNTIL{CF model converges}

\end{algorithmic}
\end{algorithm}

Algorithm \ref{alg:AdvInfoNCE} depicts the detailed procedure of AdvInfoNCE. Here we uniformly sample $N$ negative items for each observed interaction and multiply a large weighting parameter $K$ in front of each negative item, as a surrogate of the whole negative set $\Set{N}_u$. Specifically, we adversarially train the hardness $\delta_{j}^{(u,i)}$ at a fixed interval before reaching the maximum adversarial training epochs $E_{adv}$. 
The precise methods for computing the hardness $\delta_{j}^{(u,i)}$ are further discussed in Section \ref{sec:hardness_learning}.

\subsection{Effect of the Fine-grained Hardness on KuaiRec} \label{sec:ablation_study}

\begin{figure*}[t]
	\centering
        \subcaptionbox{Distribution of hardness\label{fig:p_distribution_mean_kuairec}}{
	    \vspace{-6pt}
		\includegraphics[width=0.32\linewidth]{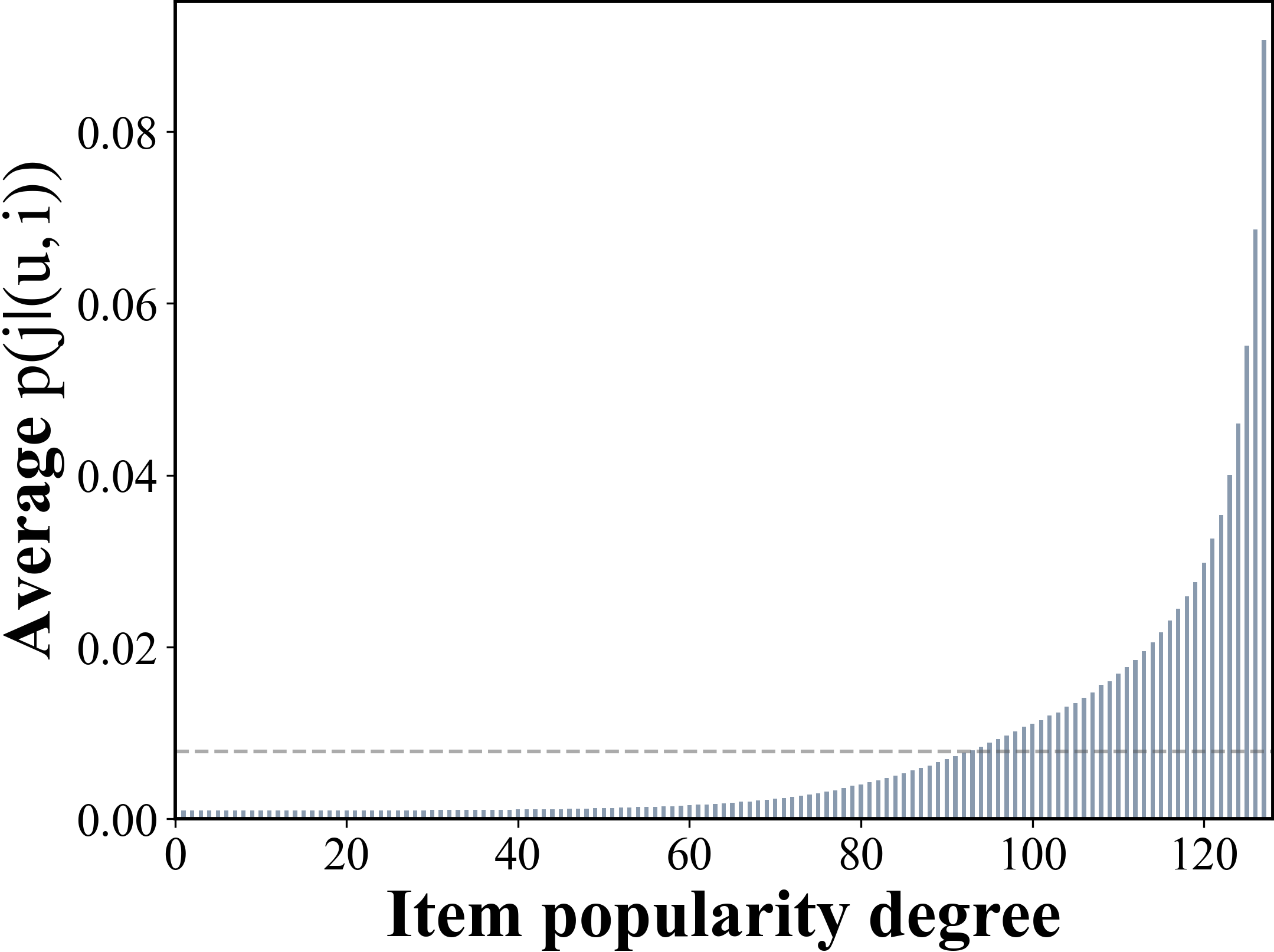}}
	\subcaptionbox{Varying hardness strategies\label{fig:kuairec_uniformity}}{
	    \vspace{-6pt}
		\includegraphics[width=0.32\linewidth]{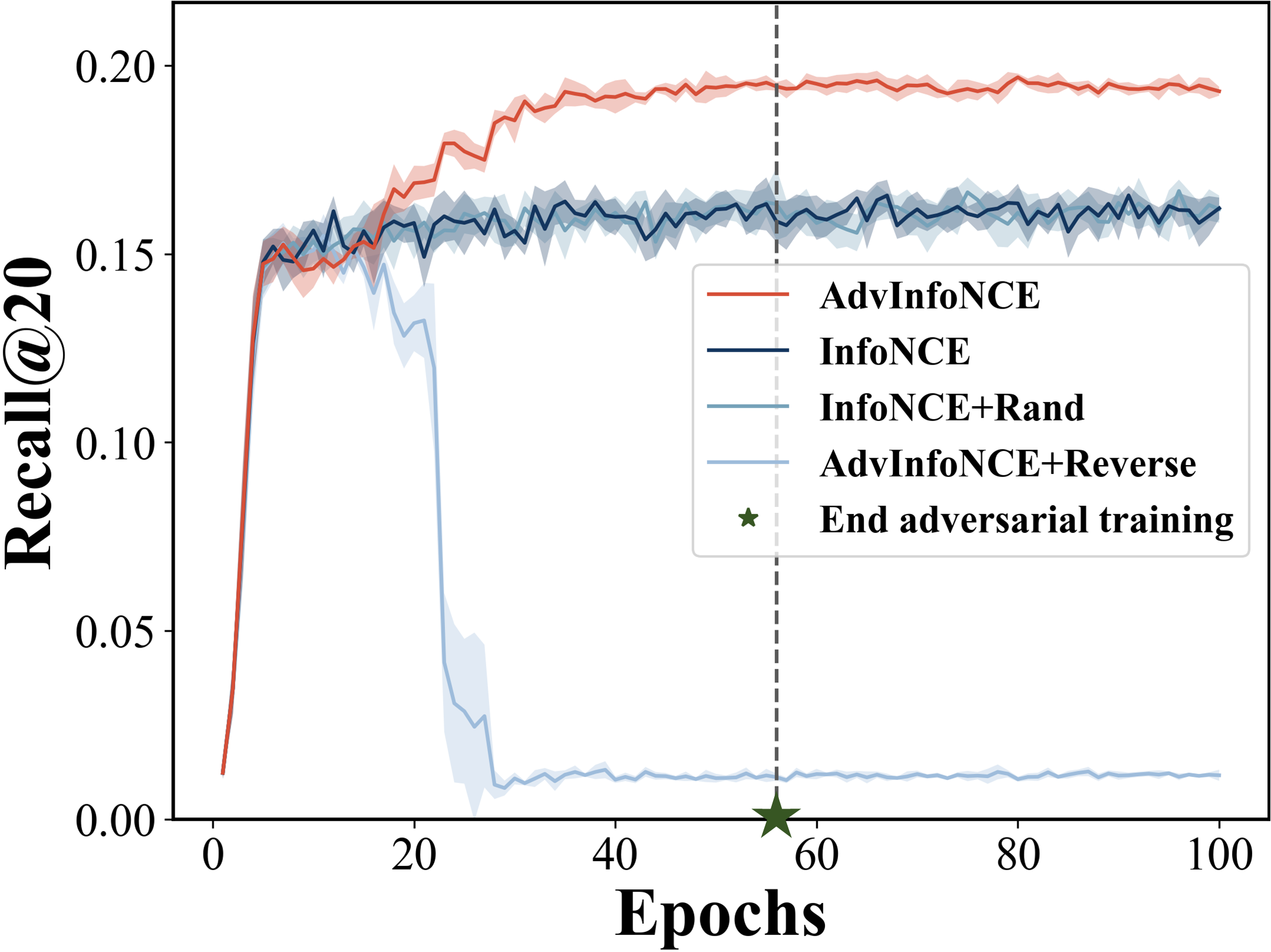}}
	\subcaptionbox{Alignment \& uniformity analysis \label{fig:kuairec_abl_recall}}{
	    \vspace{-6pt}
		\includegraphics[width=0.32\linewidth]{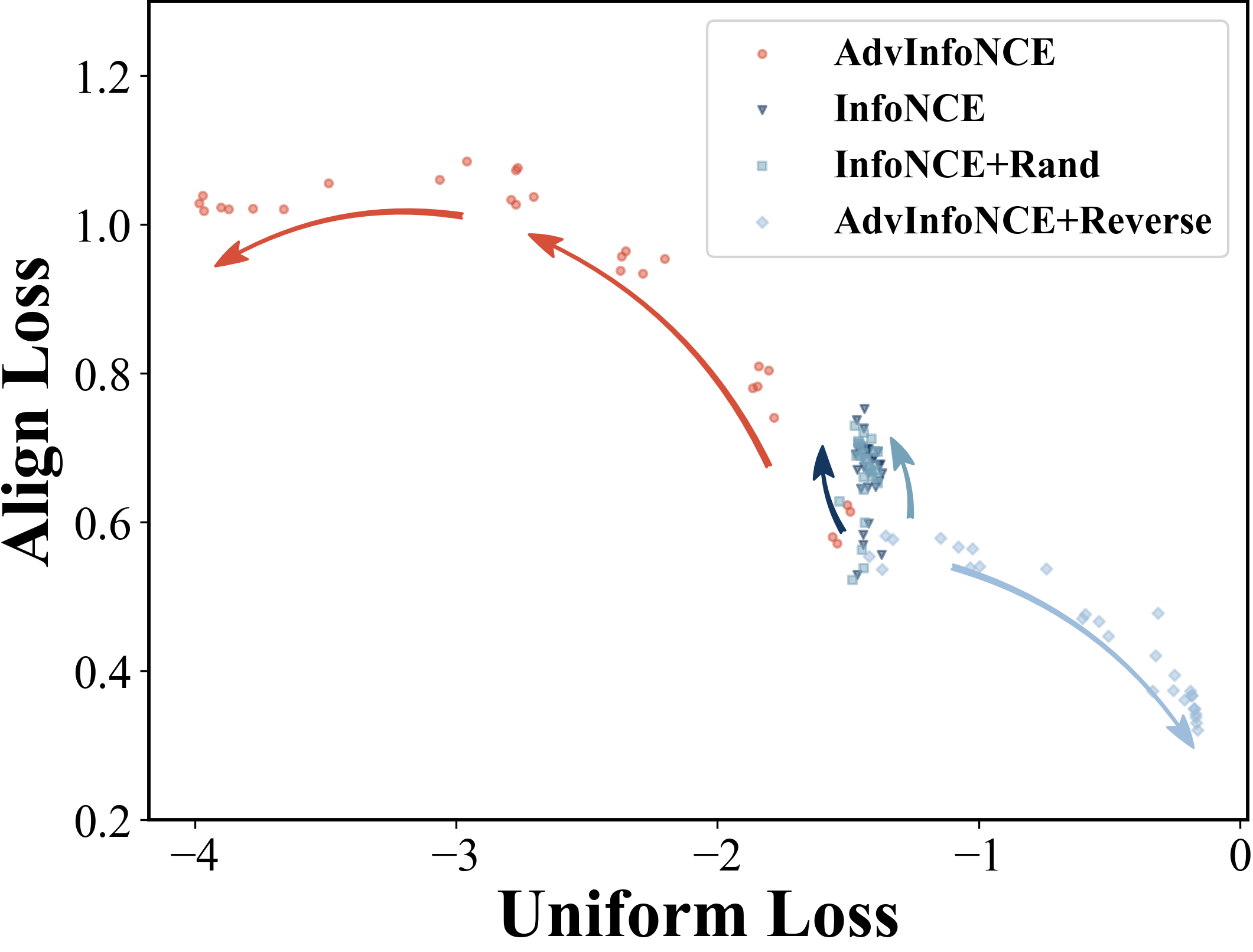}}

	\caption{Study of hardness. 
    (\ref{fig:p_distribution_mean_kuairec}) Illustration of hardness \ie the probability of negative sampling ($p\left(j|\left(u,i\right)\right)$) learned by AdvInfoNCE \wrt item popularity on KuaiRec.
    The dashed line represents the uniform distribution.
    (\ref{fig:kuairec_uniformity}) 
    Performance comparisons with varying hardness learning strategies on KuaiRec.
    (\ref{fig:kuairec_abl_recall}) The trajectories of align loss and uniform loss during training progress. 
    Lower values indicate better performance.
    Arrows denote the losses' changing directions.}
	\label{fig:ablation_kuairec}
\end{figure*}

We conduct the same experiments as Section \ref{sec:hardness_tencent} on KuaiRec, to investigate how the fine-grained hardness affects the out-of-distribution performance.
\begin{itemize}[leftmargin=*]
    \item  We plot the average value of $p(j|(u,i))$ across one batch, as depicted in Figure \ref{fig:p_distribution_mean_kuairec}. 
    The figure reveals that AdvInfoNCE learns a skewed negative sampling distribution, mirroring the trend observed in the Tencent dataset. 
    Such a distribution places more emphasis on popular negative items and reduces the difficulty of unpopular negative items, which have a higher probability of being false negatives.
    \item To examine the effect of fine-grained hardness,
    we conduct experiments with four different hardness learning strategies, including AdvInfoNCE, InfoNCE, InfoNCE-Rand, and AdvInfoNCE-Reverse.
    AdvInfoNCE-Reverse refers to a strategy where hardness is learned by minimizing, rather than maximizing, the loss function. This inversion results in what we term 'reversed hardness', in contrast to the approach of our AdvInfoNCE method.
    InfoNCE-Rand denotes the assignment of uniformly random hardness for each negative item.
    We conduct 5-fold experiments with different random seeds for each strategy and report the mean value with standard error in Figure \ref{fig:kuairec_abl_recall}.
    As the result shows, AdvInfoNCE yields consistent improvements over the other three different hardness strategies during the training phase. 
    In contrast, the performance of AdvInfoNCE-Reverse drops rapidly as continuously training the reversed hardness. The sustained superior performance of AdvInfoNCE indicates that it effectively promotes the generalization ability of the CF model by automatically distinguishing false negatives and hard negatives.
    
    \item Figure \ref{fig:kuairec_uniformity} illustrates the uniform and align loss during the training phase following the initial warm-up epochs. 
    As demonstrated, after the warm-up phase, both InfoNCE and InfoNCE-Rand exhibit a slight increase in align loss, while their uniform loss maintains a stable level. 
    In contrast, AdvInfoNCE significantly improves uniformity at an acceptable cost of increasing align loss. 
    On the other hand, employing reversed hardness (as in AdvInfoNCE-Reverse) appears to have a negative impact on representation uniformity. 
    These findings underscore the importance of fine-grained hardness in AdvInfoNCE, suggesting that AdvInfoNCE learns more generalized representations.

\end{itemize}

\subsection{Effect of the Adversarial Training Epochs on KuaiRec} \label{sec:eta_effect}

\begin{figure*}[t]
	\centering
	\subcaptionbox{Recall@20 \label{fig:eta_kuairec_recall}}{
	    \vspace{-6pt}
		\includegraphics[width=0.4\linewidth]{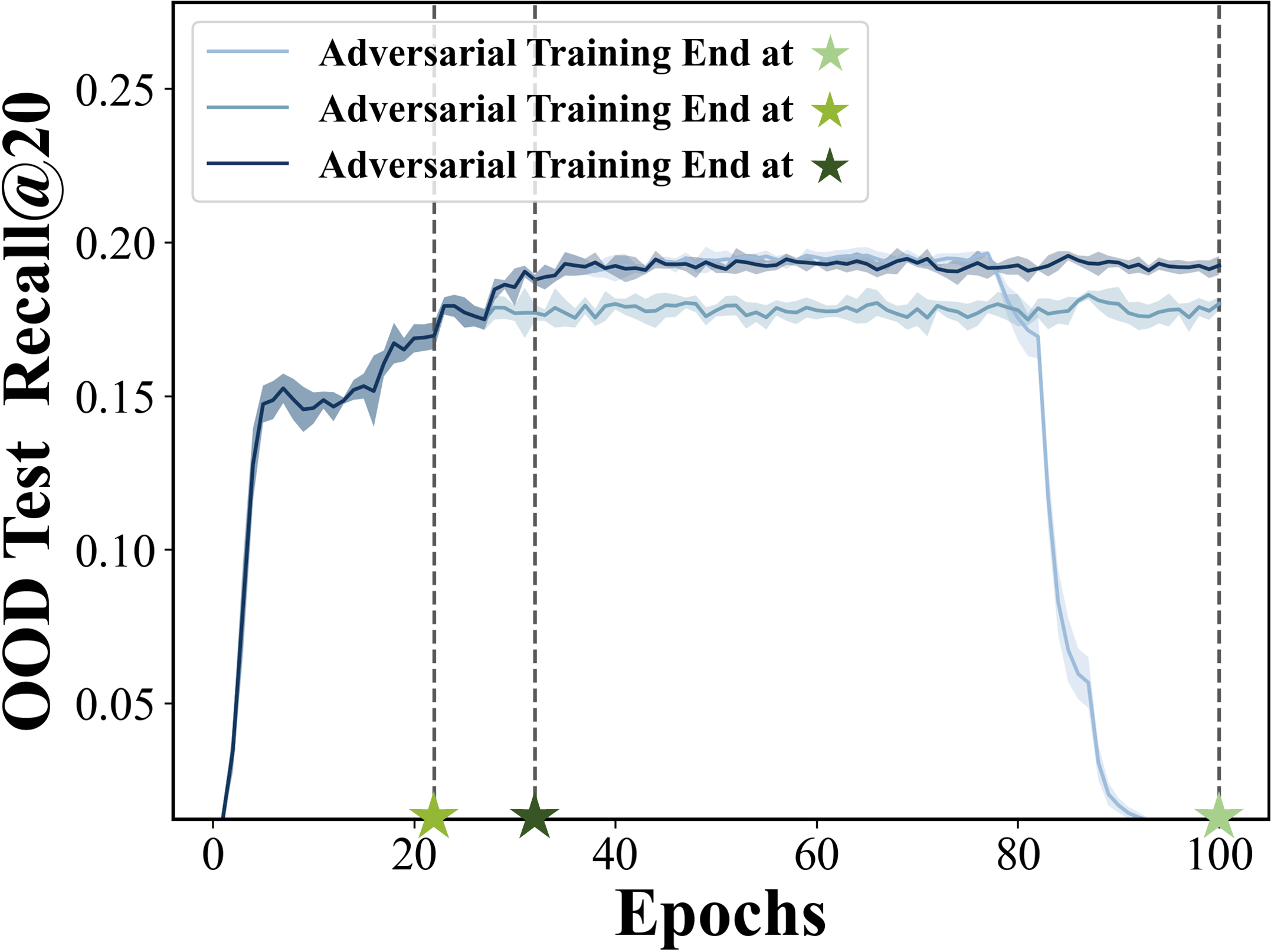}}
	\subcaptionbox{NDCG@20 \label{fig:eta_kuairec_ndcg}}{
	    \vspace{-6pt}
		\includegraphics[width=0.4\linewidth]{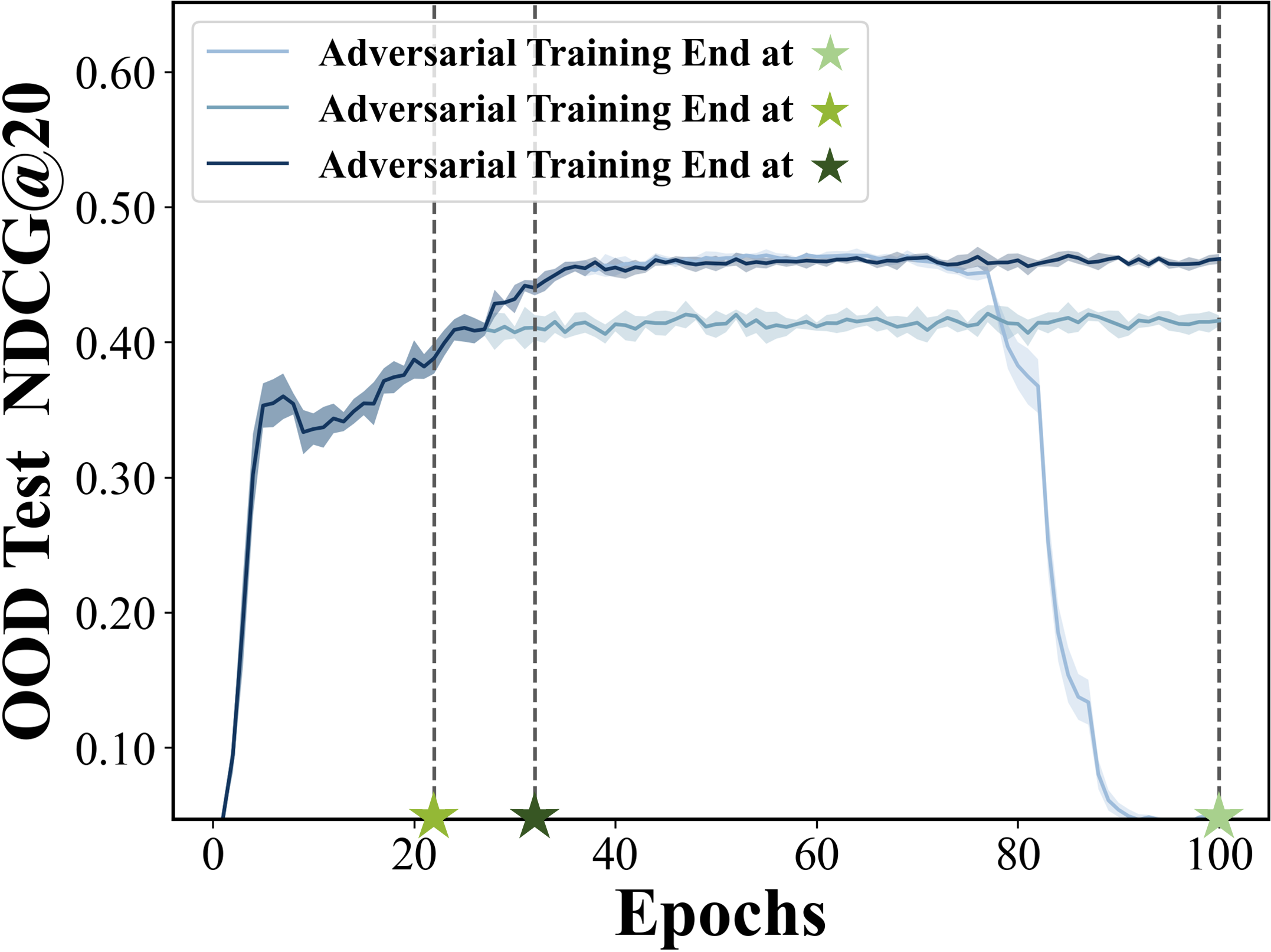}}

	\caption{The performance on KuaiRec with different numbers of adversarial training epochs. (\ref{fig:eta_kuairec_recall}) Performance comparisons \wrt Recall@20 on KuaiRec. (\ref{fig:eta_kuairec_ndcg}) Performance comparisons \wrt NDCG@20 on KuaiRec.} 
	\label{fig:eta_effect_kuairec}
	\vspace{-5pt}
\end{figure*}

In this section, we conduct experiments on KuaiRec, where AdvInfoNCE is trained for varying numbers of adversarial epochs. 
We plot performance metrics (Recall@20 in Figure \ref{fig:eta_kuairec_recall} and NDCG@20 in Figure \ref{fig:eta_kuairec_ndcg}) on the test set, which represents out-of-distribution data, throughout the training phase.
The green stars mark the corresponding endpoint of adversarial training.
As illustrated in Figure \ref{fig:eta_kuairec_recall} and \ref{fig:eta_kuairec_ndcg}, both Recall@20 and NDCG@20 show a proportional trend with the number of adversarial training epochs, up to a certain threshold.
However, it is worth noting that in the extreme condition when the number of adversarial training epochs exceeds the threshold, performance on out-of-distribution sharply declines. 
This indicates a need to strike a balance when determining the appropriate number of adversarial training epochs for AdvInfoNCE.

\subsection{Hardness Learning Strategy} \label{sec:hardness_learning}

To accurately evaluate the hardness of each negative instance, we need to establish a mapping from unobserved user-item pairs to their corresponding hardness values, and this mapping mechanism can be diversified. 
Generally, the hardness learning strategy can be formulated as:
\begin{align}
\label{eq:hardness_embed}
\delta_j^{(u,i)} & \doteq \log(|\Set{N}_u|\cdot p(j|(u,i)))\\
& \doteq \log\left(|\Set{N}_u|\cdot \frac{\exp{(g_{\theta_{adv}}(u,j))}}{\sum_{k=1}^{|\Set{N}_u|}\exp(g_{\theta_{adv}}(u,k))}\right),
\end{align}
where $g_{\theta_{adv}}(u,j)$ is a raw hardness score function for the unobserved user-item pair $(u,j)$, and $p(j|(u,i))$ is the probability of sampling the negative instance $j$, which is calculated by normalizing $g_{\theta_{adv}}(u,j)$. 
In this paper, we proposed two specific mapping methods: embedding-based (\ie AdvInfoNCE-embed) mapping and multilayer perceptron-based (\ie AdvInfoNCE-mlp) mapping.
It should be noted that all the results reported in the main text of this paper are based on the implementation of the AdvInfoNCE-embed version.

\textbf{AdvInfoNCE-embed.} 
The hardness computation process in AdvInfoNCE-embed follows a similar protocol as CF models. 
We directly map the index of users and items into its corresponding hardness embedding and calculate the hardness of each user-item pair through a score function. 
Specifically, this process involves a user hardness encoder $\psi_{\theta_{adv}}(\cdot) :\Space{U}\rightarrow\Space{R}^d$ and an item hardness encoder $\phi_{\theta_{adv}}(\cdot) :\Space{I}\rightarrow\Space{R}^d$, where $\theta_{adv}$ denotes all the trainable parameters of the hardness learning model. 
In our experiments, we adopt the same embedding dimension as the CF models for hardness learning. 
For the score function, we define $g_{\theta_{adv}}(u,j) = \Trans{\psi_{\theta_{adv}}(u)}\cdot\phi_{\theta_{adv}}(j)$. 

\textbf{AdvInfoNCE-mlp.} Unlike AdvInfoNCE-embed, AdvInfoNCE-mlp maps the embeddings of items and users from the CF model into another latent space with two multilayer perceptrons (MLPs) and calculates the hardness on this latent space. 
By respectively defining the MLPs for users and items as $\mathrm{MLP_{u}}$ and $\mathrm{MLP_{v}}$, the score function for hardness calculation is defined as $g_{\theta_{adv}}(u,j) = \Trans{\mathrm{MLP_{u}}\left(\psi_{\theta}(u)\right)}\cdot\mathrm{MLP_{v}}\left(\phi_{\theta}(j)\right)$. 
In our experiment, we simply employ one-layer MLPs and set the dimension of latent space as four. 
It's worth noting that this MLP-based implementation of AdvInfoNCE may also be adaptable for handling out-of-distribution tasks in other fields, such as computer vision (CV) and natural language processing (NLP).

\begin{table*}[t]
    \centering
    \caption{The performance comparison between AdvInfoNCE-embed and AdvInfoNCE-mlp over the LightGCN backbone.}
    \label{tab:mlp_lgn}
    \resizebox{1\linewidth}{!}{
    \begin{tabular}{l|cc|cc|cc|cc|cc|cc}
    \toprule
     & \multicolumn{2}{c|}{KuaiRec} & \multicolumn{2}{c|}{Yahoo!R3} & \multicolumn{2}{c|}{Coat} & \multicolumn{2}{c|}{Tencent ($\gamma$ = 200)} & \multicolumn{2}{c|}{Tencent ($\gamma$ = 10)}  & \multicolumn{2}{c}{Tencent ($\gamma$ = 2)}\\
    \multicolumn{1}{c|}{} & Recall & NDCG & Recall & NDCG & Recall & NDCG & Recall & NDCG & Recall & NDCG & Recall & NDCG\\\midrule

InfoNCE  & $0.1800$ & $0.4529$ & $0.1475$ & $0.0698$ & $0.2689$ & $0.1882$ & $0.0540$ & $0.0320$ & $0.0332$ & $0.0195$ & $0.0242$ & $0.0145$ \\\midrule
\textbf{AdvInfoNCE-embed}  & $0.1979$ & $0.4697$ & $0.1527$ & $0.0718$ & $0.2846$ & $0.2026$ & $0.0594$ & $0.0356$ & $0.0403$ & $0.0243$ & $0.0295$ & $0.0180$\\

Imp.\% over InfoNCE   & $9.94\%$ & $3.71\%$ & $3.53\%$ & $2.87\%$ & $5.84\%$ & $7.65\%$ & $10.00\%$ & $11.25\%$ & $21.39\%$ & $24.62\%$ & $21.90\%$ & $24.14\%$ \\\midrule
\textbf{AdvInfoNCE-mlp}  & $0.1851$ & $0.4579$ & $0.1545$ & $0.0724$ & $0.2843$ & $0.2002$ & $0.0567$ & $0.0339$ & $0.0364$ & $0.0221$ & $0.0260$ & $0.0160$\\

Imp.\% over InfoNCE  & $2.83\%$ & $1.10\%$ & $4.75\%$ & $3.72\%$ & $5.73\%$ & $6.38\%$ & $5.00\%$ & $5.94\%$ & $9.64\%$ & $13.33\%$ & $7.44\%$ & $10.34\%$ 
\\\bottomrule

\end{tabular}}
     \vspace{-10pt}
\end{table*} 

As reported in Table \ref{tab:mlp_lgn}, both AdvInfoNCE-embed and AdvInfoNCE-mlp yield significant improvements over InfoNCE. Moreover, AdvInfoNCE-embed generally outperforms AdvInfoNCE-mlp.

\subsection{The Intuitive Understanding of AdvInfoNCE} \label{sec:intuitive}
In this section, we aim to understand the mechanism of AdvInfoNCE intuitively, from the perspective of false negative identification.\\
Figure \ref{fig:false_negative_dynamic} illustrates the changes of the out-of-distribution performance and FN identification rate during the training process on Tencent. Here, the FN identification rate indicates the proportion of false negatives with negative $\delta_{j}$. It can be observed that the out-of-distribution performance exhibits a rising trend along with the FN identification rate. Meanwhile, the out-of-distribution performance of InfoNCE remains relatively low. This indicates that AdvInfoNCE enhances the generalization ability of the CF model by identifying false negatives.

Figure \ref{fig:false_neg_example} illustrates how AdvInfoNCE adjusts the scores and rankings of sampled negative items, by identifying the false negatives and true negatives. We retrieve the negative items sampled during training. If a sampled negative item appears in the test set, it is labeled as a false negative (FN); otherwise. In Figure \ref{fig:user_46767} and \ref{fig:user_54944}, the leftmost item represents a false negative, while the other two items on the right are negatives. The bar charts in blue and red depict the cosine similarity scores of sampled negative items measured by InfoNCE and AdvInfoNCE respectively. The rankings of sampled negative items are annotated above the bars. The line graph illustrates the hardness $\delta_{j}$ computed by AdvInfoNCE, measured on the right axis. It can be observed that when the hardness $\delta_{j}$ is negative (\ie indicating that the item is identified as a false negative), the cosine similarity score improves. Conversely, if the hardness $\delta_{j}$ is positive, the score decreases. 

\begin{figure*}
  \centering
  \includegraphics[width=0.6\linewidth]{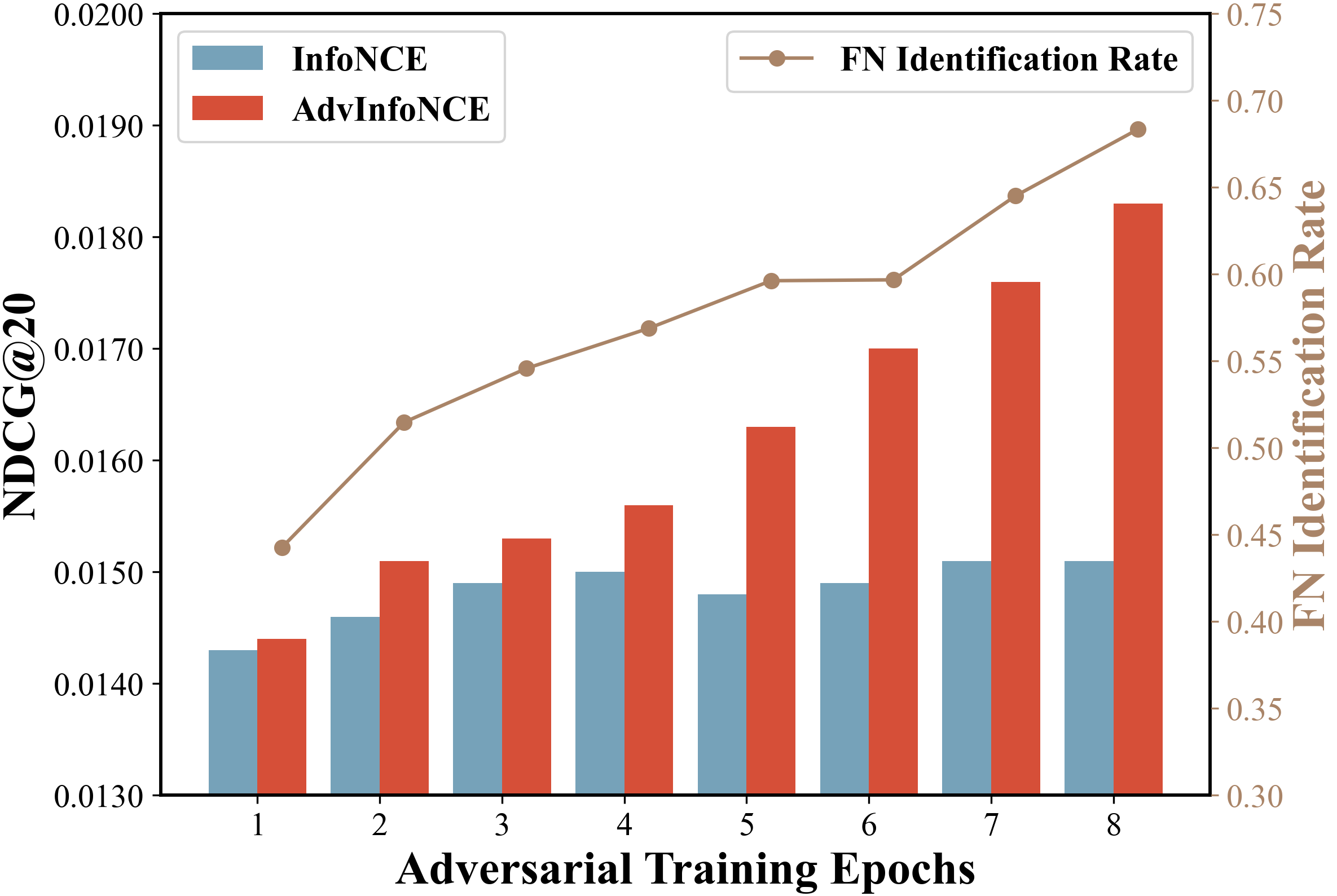}
  \caption{FN identification rate and NDCG@20 during training on Tencent, where FN identification rate indicates the proportion of false negatives (FN) with negative $\delta_{j}$ and NDCG@20 shows the out-of-distribution performance. As training proceeds, AdvInfoNCE' FN identification rate increases, capping at nearly $70\%$.
  This reveals AdvInfoNCE's capability to identify approximately $70\%$ of false negatives in the test set. We attribute the superior recommendation performance of AdvInfoNCE over InfoNCE to this gradual identification.}
  \label{fig:false_negative_dynamic}
\end{figure*}

\begin{figure*}[t]
	\centering
	\subcaptionbox{User 46767 \label{fig:user_46767}}{
	    \vspace{-6pt}
		\includegraphics[width=0.45\linewidth]{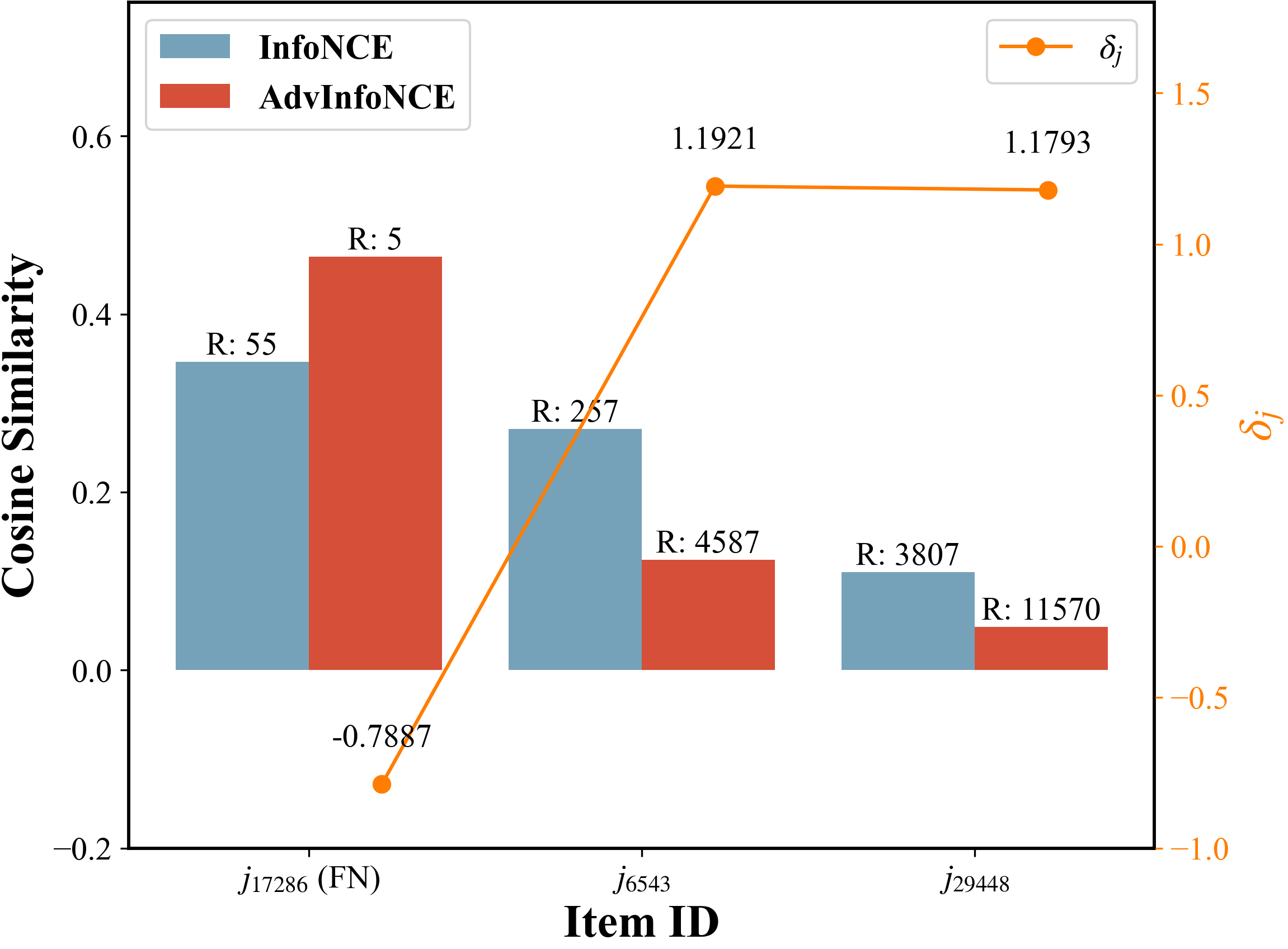}}
	\subcaptionbox{User 54944 \label{fig:user_54944}}{
	    \vspace{-6pt}
		\includegraphics[width=0.45\linewidth]{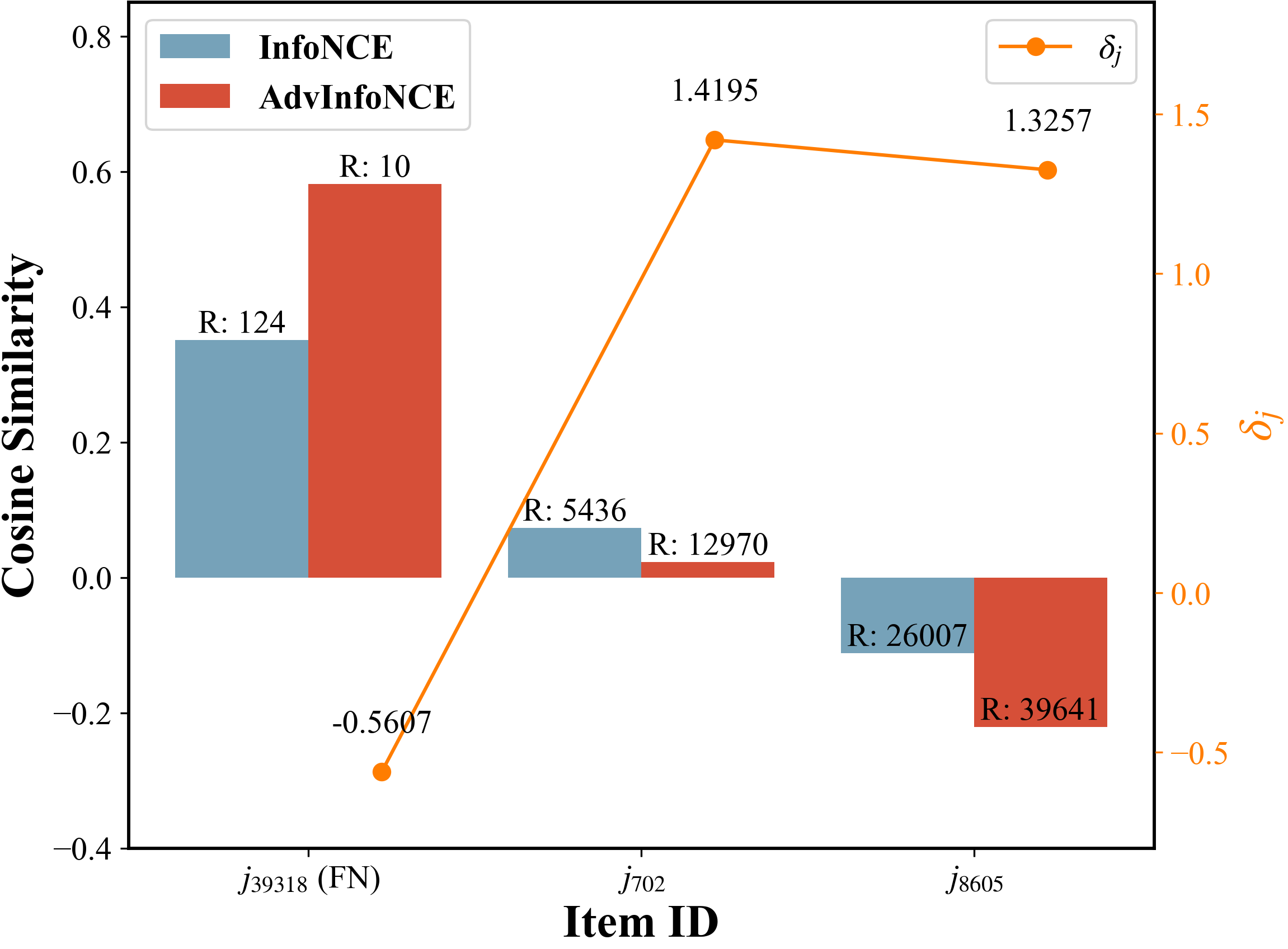}}

	\caption{Case studies of refining the item ranking. With two randomly sampled users along with their sampled negative items, we subsequently retrieve their associated $\delta$ values, ranking positions, and cosine similarities. Here FN denotes false negative (\ie interactions unobserved during training but present in testing). The bar charts demonstrate the cosine similarity scores of these sampled negative items as gauged by both InfoNCE and AdvInfoNCE. Their rankings are annotated atop the bars. An accompanying line illustrates the hardness $\delta_{j}$ derived by AdvInfoNCE (measured on the right axis). Notably, when $\delta<0$, AdvInfoNCE identifies and elevates an FN; conversely, for a potentially true negative, AdvInfoNCE leans towards a positive $\delta$ and declines its rank.} 
	\label{fig:false_neg_example}
	\vspace{-5pt}
\end{figure*}

\section{Hyperparameter Settings}

\begin{table}[t]
    \centering
    \caption{Hyperparameters search spaces for baselines.}
    \label{tab:parameter_baseline}
    \resizebox{\columnwidth}{!}{
    \begin{tabular}{l|l}
    \toprule
     \multicolumn{1}{c|}{} & \multicolumn{1}{c}{Hyperparameter space} \\\midrule
     
    \textbf{MF $\&$ LightGCN} & \makecell[l]{lr $\sim $ \{1e-5, 3e-5, 5e-5, 1e-4, 3e-4, 5e-4, 1e-3\}, batch size $\sim $ \{64, 128, 256, 512, 1024, 2048\} \\
    No. negative samples $\sim$  \{64, 128, 256, 512\} }\\\midrule
    \textbf{SSM} & $\tau$ $\sim $ [0.05, 3]  \\\midrule
    \textbf{CCL} & $w$ $\sim $ \{1, 2, 5, 10, 50, 100, 200\}, $m$ $\sim $ \{0.2, 0.4, 0.6, 0.8, 1\} \\\midrule
    \textbf{BC Loss} & $\tau_{1}$ $\sim $ [0.05, 3], $\tau_{2}$ $\sim $ [0.05, 3] \\\midrule
    \textbf{Adap-$\tau$} & $warm\_up\_epochs$ $\sim $ \{10, 20, 50, 100\}  \\\midrule
    \textbf{SGL} & $\tau$ $\sim $ [0.05, 3], $\lambda_{1}$ $\sim $ \{0.005, 0.01, 0.05, 0.1, 0.5, 1.0\}, $\rho$ $\sim $ \{0, 0.1, 0.2, 0.3, 0.4, 0.5\}  \\\midrule
    \textbf{NCL} & $\tau$ $\sim $ [0.05, 3], $\lambda_{1}$ $\sim $ [1e-10, 1e-6], $\lambda_{2}$ $\sim $ [1e-10, 1e-6], $k$ $\sim $ [5, 10000]  \\\midrule
    \textbf{XSimGCL} & $\tau$ $\sim $ [0.05, 3], $\epsilon$ $\sim $ \{0.01, 0.05, 0.1, 0.2, 0.5, 1.0\}, $\lambda$ $\sim $ \{0.005, 0.01, 0.05, 0.1, 0.5, 1.0\}, $l*$ = 1  \\\midrule

    \end{tabular}}
\end{table}

\begin{table}[t]
    \centering
    \caption{Model architectures and hyperparameters for AdvInfoNCE.}
    \label{tab:parameter}
    \resizebox{0.8\columnwidth}{!}{
    \begin{tabular}{l|c|c|c|c|c|c|c}
    \toprule
     \multicolumn{1}{c|}{} & \multicolumn{7}{c}{Hyperparameters of AdvInfoNCE} \\\midrule
     
    \multicolumn{1}{c|}{} & $lr_{adv}$ & $E_{adv}$ & $T_{adv}$ & $\tau$ & lr & batch size & No. negative samples \\\midrule
    \textbf{LightGCN} \\\midrule
    Tencent & 5e-5 & 7 & 5  & 0.09 & 1e-3 & 2048 & 128 \\\midrule
    KuaiRec & 5e-5 & 12 & 5  &  2 & 3e-5 & 2048 & 128 \\\midrule
    Yahoo!R3 & 1e-4 & 13 & 5 &  0.28 & 5e-4 & 1024 & 64 \\\midrule
    Coat & 1e-2 & 20 & 15  & 0.75 & 1e-3 & 1024 & 64 \\\midrule
    
    \textbf{MF} \\\midrule
    Tencent & 5e-5 & 8 & 5 & 0.09 & 1e-3 & 2048 & 128 \\\midrule
    Yahoo!R3 & 1e-4 & 12 & 5  &  0.28 & 5e-4 & 1024 & 64 \\\midrule
    Coat & 1e-2 & 18 & 15  & 0.75 & 1e-3 & 1024 & 64 \\\midrule

    \end{tabular}}
    \vspace{-5pt}
\end{table}

For a fair comparison, we conduct all the experiments in PyTorch with a single Tesla V100-SXM3-32GB GPU and an Intel(R) Xeon(R) Gold 6248R CPU. We optimize all methods with the Adam optimizer and set the layer numbers of LigntGCN by default at 2, with the embedding size as 64 and the weighting parameter $K$ as 64. We search for hyperparameters within the range provided by the corresponding references. For AdvInfoNCE, we search $lr_{adv}$ in [1e-1, 1e-4], $E_{adv}$ in [1, 30] and $T_{adv}$ in \{5, 10, 15, 20\}. We adopt the early stop strategy that stops training if Recall@20 on the validation set does not increase for 20 successive evaluations. It's worth noting that AdvInfoNCE inherits the hyperparameter sensitivity property of adversarial learning, therefore it's necessary to choose proper hyperparameters for different datasets. We suggest selecting a suitable adversarial learning rate $lr_{adv}$ first and then increasing the number of adversarial training epochs $E_{adv}$ gradually until AdvInfoNCE reaches a relatively stable performance. We report the effect of changing the number of negative sampling in Table \ref{tab:n_neg}, where $N$ is the number of negative sampling.


\begin{table*}[t]
    \centering
    \caption{Varying number of negative sampling on Tencent}
    \label{tab:n_neg}
    \resizebox{0.85\linewidth}{!}{
    \begin{tabular}{c|ccc|ccc|ccc|ccc}
    \toprule
    & \multicolumn{3}{c|}{$\gamma$ = 200} & \multicolumn{3}{c|}{$\gamma$ = 10}  & \multicolumn{3}{c|}{$\gamma$ = 2} & \multicolumn{1}{c}{Validation}\\
    \multicolumn{1}{c|}{N} & HR & Recall & NDCG & HR & Recall & NDCG & HR & Recall & NDCG & NDCG\\\midrule

64 & $0.1513$ & $0.0563$ & $0.0333$ & $0.1006$& $0.0373$ & $0.0225$ &$0.0708$ & $0.0269$ & $0.0164$ & $0.0854$ \\
128 & $0.1600$ & $0.0594$ & $0.0356$ & $0.1087$ & $0.0403$ & $0.0243$ & $0.0774$ & $0.0295$ & $0.0180$ & $0.0879$ \\
256 & $0.1642$ & $ 0.0609$ & $ 0.0367$ & $0.1125$ & $ 0.0419 $ & $ 0.0253$ & $0.0815$ & $ 0.0310$ & $ 0.0189$ & $0.0889$ \\\bottomrule

\end{tabular}}
     \vspace{-10pt}
\end{table*}

\clearpage


\end{document}